\documentclass[letterpaper,journal]{IEEEtran}
\usepackage{amsmath,amsfonts}
\usepackage{algorithmic}
\usepackage{algorithm}
\usepackage{array}
\usepackage[caption=false,font=footnotesize]{subfig}
\usepackage{textcomp}
\usepackage{stfloats}
\usepackage{url}
\usepackage{verbatim}
\usepackage{graphicx}
\usepackage{cite}
\usepackage{enumitem}
\usepackage{amsmath,amsfonts,amsthm,amssymb,mathtools}

\usepackage{microtype}
\usepackage{xfrac}
\usepackage[makeroom]{cancel}
\usepackage{mathtools}
\usepackage{bookmark}
\usepackage{enumitem}
\usepackage{theoremref}
\usepackage{hyperref}
\usepackage[most,many,breakable]{tcolorbox}
\usepackage{varwidth}
\usepackage{varwidth}
\usepackage{etoolbox}
\usepackage{nameref}
\usepackage{multicol,array}
\usepackage{algorithm,algorithmic}
\usepackage{comment} 
\usepackage{import}
\usepackage{xifthen}
\usepackage{pdfpages}
\usepackage{transparent}
\usepackage{chngcntr}
\usepackage{tikz}
\usetikzlibrary{calc,decorations.pathreplacing}
\usepackage{titletoc}

\usepackage{cleveref}
\crefname{figure}{Fig.}{Figs.}
\Crefname{figure}{Fig.}{Figs.}
\crefname{subfigure}{Fig.}{Figs.}
\Crefname{subfigure}{Fig.}{Figs.}
\crefname{table}{TABLE}{TABLES}
\Crefname{table}{TABLE}{TABLES}
\crefname{section}{Section}{Sections}
\Crefname{section}{Section}{Sections}
\crefname{subsection}{Section}{Sections}
\Crefname{subsection}{Section}{Sections}
\crefname{algorithm}{Algorithm}{Algorithms}
\Crefname{algorithm}{Algorithm}{Algorithms}
\crefname{theorem}{Theorem}{Theorems}
\Crefname{theorem}{Theorem}{Theorems}
\crefname{proposition}{Proposition}{Propositions}
\Crefname{proposition}{Proposition}{Propositions}
\crefname{corollary}{Corollary}{Corollaries}
\Crefname{corollary}{Corollary}{Corollaries}
\crefname{appendix}{Appendix}{Appendices}
\Crefname{appendix}{Appendix}{Appendices}

\usepackage{empheq}

\definecolor{doc}{RGB}{0,60,110}
\definecolor{myg}{RGB}{56, 140, 70}
\definecolor{myb}{RGB}{45, 111, 177}
\definecolor{myr}{RGB}{199, 68, 64}
\definecolor{mytheorembg}{HTML}{F2F2F9}
\definecolor{mytheoremfr}{HTML}{00007B}
\definecolor{mylemmabg}{HTML}{FFFAF8}
\definecolor{mylemmafr}{HTML}{983b0f}
\definecolor{mypropbg}{HTML}{f2fbfc}
\definecolor{mypropfr}{HTML}{191971}
\definecolor{myexamplebg}{HTML}{F2FBF8}
\definecolor{myexamplefr}{HTML}{88D6D1}
\definecolor{myexampleti}{HTML}{2A7F7F}
\definecolor{mydefinitbg}{HTML}{E5E5FF}
\definecolor{mydefinitfr}{HTML}{3F3FA3}
\definecolor{notesgreen}{RGB}{0,162,0}
\definecolor{myp}{RGB}{197, 92, 212}
\definecolor{mygr}{HTML}{2C3338}
\definecolor{myred}{RGB}{127,0,0}
\definecolor{DodgerBlue}{RGB}{30, 144, 255}
\definecolor{myyellow}{RGB}{169,121,69}
\definecolor{myexercisebg}{HTML}{F2FBF8}
\definecolor{myexercisefg}{HTML}{88D6D1}

\hypersetup{
	colorlinks=true,
	linkcolor=myr,  
	citecolor=myg,  
	urlcolor=myb,   
	breaklinks=true,
	bookmarksnumbered=true,
	bookmarksopen=true,
	pdfdisplaydoctitle=true,
	pdfstartview={FitH},
	pdfpagemode=UseOutlines,
	unicode=true
}

\newif\ifMasterColorSwitch
\MasterColorSwitchtrue

\newif\ifEnableMyColors
\EnableMyColorstrue

\NewDocumentCommand{\MasterColorMode}{m}{%
	\ifthenelse{\equal{#1}{on}}%
	{%
		\MasterColorSwitchtrue
		\typeout{>>> Master Color Mode Enabled}%
	}%
	{%
		\MasterColorSwitchfalse
		\typeout{>>> Master Color Mode Disabled}%
	}%
}

\NewDocumentCommand{\ColorMode}{m}{%
	\ifthenelse{\equal{#1}{on}}%
	{%
		\EnableMyColorstrue
		\typeout{>>> Custom Colors Enabled}%
	}%
	{%
		\EnableMyColorsfalse
		\typeout{>>> Custom Colors Disabled}%
	}%
}

\newcommand{\RR}[1][]{\ensuremath{\ifstrempty{#1}{\mathbb{R}}{\mathbb{R}^{#1}}}}

\newcommand{\CC}[1][]{\ensuremath{\ifstrempty{#1}{\mathbb{C}}{\mathbb{C}^{#1}}}}
\newcommand{\PP}[1][]{\ensuremath{\ifstrempty{#1}{\mathbb{P}}{\mathbb{P}^{#1}}}}

\newcommand{\EE}{\ensuremath{\mathbb{E}}}

	\newcommand{\bfD}{\mathbf{D}}

\newcommand{\bfI}{\mathbf{I}}	
	
\newcommand{\bfM}{\mathbf{M}}

\newcommand{\bfa}{\mathbf{a}}	
	
\newcommand{\bfe}{\mathbf{e}}

\newcommand{\bfu}{\mathbf{u}}	\newcommand{\bfv}{\mathbf{v}}
	\newcommand{\bfx}{\mathbf{x}}
	\newcommand{\bfz}{\mathbf{z}}

\newcommand{\rmd}{\mathrm{d}} 
\newcommand{\rme}{\mathrm{e}} 

\newcommand{\veps}{\varepsilon}

\newcommand{\iid}{\textit{i.i.d.\ }}
\newcommand{\bfSigma}{\boldsymbol{\Sigma}}

\graphicspath{{./figs/}}

\newcommand{\ALGSTAGE}[1]{\item[]\textbf{#1}}

\newtheorem{theorem}{Theorem}

\newtheorem{proposition}{Proposition}
\newtheorem{corollary}{Corollary}

\begin{document}

\title{Deflection-Optimal Spectral Design for Diagonal Screening in Sparse Phase Retrieval Initialization}

\author{Mengchu~Xu,~\IEEEmembership{Member,~IEEE},
	and~Yonina~C.~Eldar,~\IEEEmembership{Fellow,~IEEE}
	\thanks{Mengchu Xu and Yonina C. Eldar are with the Faculty of Mathematics and Computer Science, Weizmann Institute of Science, Rehovot 7610001, Israel (e-mail: mengchu.xu@weizmann.ac.il; yonina.eldar@weizmann.ac.il). Yonina C. Eldar is also with the Department of Electrical and Computer Engineering, Northeastern University, Boston, MA 02115 USA (e-mail: y.eldar@northeastern.edu). Corresponding author: Mengchu Xu.}
}

\maketitle


\begin{abstract}
	Spectral initialization is a critical yet challenging step in sparse phase retrieval. Existing spectral design theory is largely tailored to dense phase retrieval, where the objective is eigenvector estimation. In contrast, sparse initialization first requires a statistically distinct support screening step whose design remains much less understood. This paper develops a stage-specific design theory for diagonal support screening. We formulate each coordinate score as a scalar statistic for distinguishing support from non-support coordinates and adopt the deflection criterion as a tractable measure of screening quality. Within a Hilbert-space formulation, we characterize the optimal spectral preprocessors that maximize this criterion. In the Gaussian model, the unique optimum is the centered linear preprocessor. To obtain a bounded implementation, we introduce a spherical normalization and characterize its exact optimal preprocessor. Since the exact spherical optimum exhibits a boundary singularity, we construct a bounded surrogate preprocessor and establish its unique optimality under a surrogate deflection criterion. The surrogate optimum is shown to be the direction-only projection of the Gaussian rule, removing the unbounded radial factor while preserving the same first-order screening structure. We further establish a general finite-sample diagonal bridge that connects the exact and surrogate deflection quotients to the initialization sample complexity, and that replacing the unknown signal energy by its empirical estimate introduces only a lower-order perturbation. Numerical experiments are consistent with the ordering predicted by the design quotients and show that the Gaussian centered rule and its spherical counterpart behave nearly identically at both the screening and initialization levels.
\end{abstract}

\begin{IEEEkeywords}
	Sparse phase retrieval, spectral design, diagonal thresholding, deflection criterion, Hilbert-space optimization
\end{IEEEkeywords}

\section{Introduction}\label{sec:introduction}
Phase retrieval seeks to recover an unknown signal from magnitude-only measurements, a problem arising in coherent diffraction imaging, optics, and X-ray crystallography~\cite{Fienup1978Reconstructionobjecta, Miao1999ExtendingmethodologyXray, Harrison1993Phaseproblema, Shechtman2015Phaseretrievalapplication}. In the sparse regime, recovery requires estimating both the signal coefficients and their support, introducing an additional combinatorial challenge~\cite{Shechtman2014GESPAREfficientphase, Moravec2007Compressivephase, Ohlsson2012CPRLextensioncompressive, Li2013Sparsesignalrecovery}. We consider the complex Gaussian measurement model~\cite{Balan2006signalreconstructiona, Candes2013PhaseLiftExactstable, Candes2014Solvingquadraticequations}:
\begin{equation}\label{eq:model}
	y_i = |\bfa_i^* \bfx|^2,\qquad i=1,\dots,m,
\end{equation}
where \(\bfx\in\CC^n\) is \(k\)-sparse, the sensing vectors \(\bfa_i\) are independent standard complex Gaussian vectors, and the recovery target is \(\{\rme^{\mathbf{i}\theta}\bfx:\theta\in[0,2\pi)\}\). Modern sparse phase retrieval methods are typically nonconvex or multistage~\cite{Li2013Sparsesignalrecovery, Schniter2015Compressivephaseretrieval}, and both theory and empirical evidence suggest that their performance depends critically on the quality of the initial signal estimate~\cite{Li2022Samplingcomplexity,Sun2026Phaseretrieval,Huang2026Recoveryperformance, Chen2017Solvingrandomquadratic}. Understanding the fundamental measurement requirement for obtaining a high-quality initializer therefore becomes an important problem. This motivates the study of the sample complexity of initialization: how many samples are required to construct a sufficiently accurate initializer, and what are the corresponding asymptotic scaling law and leading constant?

\begin{figure}[!t]
	\centering
	\resizebox{\columnwidth}{!}{%
		\begin{tikzpicture}[
        font=\footnotesize,
        >=stealth,
        mainarrow/.style={->, line width=0.6pt, draw=black!75},
        feedback/.style={->, dashed, line width=0.6pt, draw=black!50},
        stage/.style={
                draw=black,
                line width=0.75pt,
                fill=white,
                rounded corners=2pt,
                inner xsep=5pt,
                inner ysep=5pt,
                align=center,
                text width=3.8cm
            },
        support/.style={
                draw=black!25,
                line width=0.5pt,
                fill=black!2,
                rounded corners=1pt,
                inner xsep=4pt,
                inner ysep=2pt,
                align=center,
                font=\scriptsize
            },
        brace/.style={
                decorate,
                decoration={brace, amplitude=3pt},
                draw=black!75,
                line width=0.5pt
            },
        sidenote/.style={
                align=left,
                text width=3.05cm,
                font=\scriptsize,
                text=black
            },
        loopinfo/.style={
                fill=white,
                inner xsep=2pt,
                inner ysep=1pt,
                align=center,
                font=\scriptsize,
                text=black!75,
                text width=2.20cm
            }
    ]
    \node[align=center] (input) at (0,0)
    {Observations $\{y_i\}_{i=1}^m$ and sensing vectors $\{\bfa_i\}_{i=1}^m$};

    \node[stage] (stage1) at (0,-1.35)
    {Stage One:\\
        Diagonal Screening\\[-1pt]
        (Support Estimation)};

    \node[support] (support) at (0,-2.55)
    {Support estimate $\widehat S$};

    \node[stage] (stage2) at (0,-3.75)
    {Stage Two:\\
        Restricted Dense Estimation\\[-1pt]
        (Eigenvector Extraction)};

    \node[align=center] (output) at (0,-5.15)
    {Initializer $\bfx^0$};

    \draw[mainarrow] (input.south) -- (stage1.north);
    \draw[mainarrow] (stage1.south) -- (support.north);
    \draw[mainarrow] (support.south) -- (stage2.north);
    \draw[mainarrow] (stage2.south) -- (output.north);

    \coordinate (stage1brace) at ($(stage1.east)+(0.03,0)$);
    \coordinate (stage1top) at ($(stage1brace)+(0.28,0.6)$);
    \coordinate (stage1bot) at ($(stage1brace)+(0.28,-0.6)$);
    \draw[brace] (stage1bot) -- (stage1top);
    \node[sidenote, anchor=west] at ($(stage1top)+(0.08,-0.2)$)
    {Order: known screening scaling \cite{Cai2016Optimalrates,Wang2018Sparsephaseretrieval}};
    \node[sidenote, anchor=west] at ($(stage1bot)+(0.08,0.2)$)
    {\textbf{\underline{Constant: less understood}}, only heuristics~\cite{Xu2026Centralizedspectralinitialization}};

    \coordinate (stage2brace) at ($(stage2.east)+(0.03,0)$);
    \coordinate (stage2top) at ($(stage2brace)+(0.28,0.6)$);
    \coordinate (stage2bot) at ($(stage2brace)+(0.28,-0.6)$);
    \draw[brace] (stage2bot) -- (stage2top);
    \node[sidenote, anchor=west] at ($(stage2top)+(0.08,-0.2)$)
    {Order: well explored \cite{Candes2015PhaseretrievalWirtinger, Cai2016Optimalrates}};
    \node[sidenote, anchor=west] at ($(stage2bot)+(0.08,0.2)$)
    {Constant: well understood \cite{Luo2019Optimalspectralinitialization,Mondelli2019Fundamentallimitsweak}};

    \draw[feedback]
    (stage2.west) -- ++(-0.4,0)
    -- ++(0,1.20)
    node[loopinfo, left, xshift=-0.0cm, yshift=-0.5cm]
    {Optional: iterative update\\
        improves \textbf{Order} \\
        via exploiting signal structure}
    -- (support.west);
\end{tikzpicture}%
	}
	\caption{Overview of design questions in sparse phase retrieval initialization. While Stage Two constants and order scaling are well understood, the constant-level spectral design of the stage-one diagonal screening step remains less explored, motivating the deflection-optimal framework developed in this paper.}
	\label{fig:initialization_pipeline}
\end{figure}

This question has been extensively studied in the sparse phase retrieval literature. Most existing initialization schemes follow a two-stage pipeline, summarized in \Cref{fig:initialization_pipeline}. Stage One forms diagonal scores from a spectral matrix to estimate the support, a procedure closely related to marginal feature screening in high-dimensional statistics~\cite{Fan2008Sureindependencescreening}. Its order-level complexity has been extensively studied~\cite{Cai2016Optimalrates, Wang2018Sparsephaseretrieval}. Stage Two performs restricted dense eigenvector extraction on the selected support, with well-established guarantees at both the order and constant levels~\cite{Candes2015PhaseretrievalWirtinger, Cai2016Optimalrates, Luo2019Optimalspectralinitialization, Mondelli2019Fundamentallimitsweak}. Optional iterative updates may further improve the support estimate at the order-level scaling~\cite{Wu2021HadamardWirtingerflow, Cai2022Sparsesignalrecovery, Xu2024Exponentialspectralpursuit,Cai2023Provablesampleefficienta, Xu2026Achievingoptimalsample}.

This decomposition identifies a critical missing component: a constant-level theory for Stage One. Generally, algorithmic architecture drives order-level improvements by exploiting sparsity, energy concentration, or other signal structure, whereas the choice of spectral matrix governs the coordinate statistics and hence the leading constant. Therefore, characterizing the constant-level performance of stage-one screening is intrinsically tied to its spectral matrix design. This is not only of theoretical interest, but also of practical value, since refining the screening matrix can improve the finite-sample constant with essentially no additional computational cost.

Despite this practical importance, existing spectral matrices are typically chosen for simplicity or proof tractability, including raw~\cite{Wang2018Sparsephaseretrieval}, truncated~\cite{Cai2023Provablesampleefficienta}, or modified constructions~\cite{Gao2017Phaselessrecoveryusing,Xu2024Exponentialspectralpursuit,Xu2026Achievingoptimalsample}. Even the recent centered construction~\cite{Xu2026Centralizedspectralinitialization}, despite its strong empirical performance, relies on heuristic justification. This motivates the question: how should we design the stage-one spectral matrix to optimize the finite-sample screening performance in a principled and systematic manner?

Ideally, such a framework should be derived directly from the final initialization performance. However, the mapping from stage-one spectral design to the end-to-end initialization error is inextricably coupled with the dense estimation dynamics of Stage Two, making direct optimization of the final initialization error analytically prohibitive. To bypass the intractability, we shift our focus to the fundamental statistical properties of the diagonal scores in Stage One. Since each diagonal score, corresponding to a diagonal entry of the spectral matrix, acts as a scalar statistic to distinguish a support coordinate from a non-support coordinate, the design question simplifies to how the spectral preprocessor controls coordinate-wise separation. This perspective naturally leads to the classical deflection criterion from detection theory~\cite{Kay1998Fundamentalsstatisticalsignal, Picinbono1995deflectionperformance, Poor1994introductionsignaldetection,Picinbono1988Optimallinearquadraticsystems}: it maximizes the squared support/non-support mean gap normalized by the non-support variance. As a result, this criterion transforms stage-one spectral design from a heuristic choice of matrices into a principled functional optimization problem.

Building on this criterion, we develop a spectral design theory for stage-one diagonal screening and examine its impact on finite-sample analysis. The specific contributions are threefold:
\begin{enumerate}[label=(\roman*),itemsep=0.2em, topsep=0.2em]
	\item \textbf{Deflection-Optimal Spectral Design:} We establish a principled framework for stage-one diagonal screening. In the Gaussian setting, we characterize the unique optimizer of the exact deflection criterion and analyze its structural properties (\Cref{thm:optimal_preprocessor}). To address the unbounded nature of the Gaussian optimum, we introduce a tuning-free spherical normalization and derive its exact optimal design (\Cref{thm:spherical_jacobi}). Since the exact spherical optimum exhibits a boundary singularity, we construct a bounded surrogate preprocessor and establish its unique optimality under a surrogate deflection criterion (\Cref{prop:spherical_surrogate_optimum}).
	\item \textbf{Finite-Sample Screening Guarantees:} We connect the deflection-optimal design principle to finite-sample performance. For the bounded spherical construction, we establish that the required sample complexity depends explicitly on the deflection quotients, bridging the statistical screening criterion with practical support recovery behavior (\Cref{thm:finite_sample_local_response}).
	\item \textbf{Practical Implementation:} We develop a fully implementable two-stage initializer. We prove that substituting the unknown signal energy with a data-driven estimator preserves the screening behavior up to lower-order effects (\Cref{thm:plugin_stability}), yielding an algorithm consisting of spherical screening followed by restricted dense spectral extraction (\Cref{alg:practical_two_stage}).
\end{enumerate}

Numerical experiments complement the theory by evaluating the centered design principle and its spherical realization at both the screening and end-to-end initialization levels.

The remainder of the paper is organized as follows. \Cref{sec:preliminaries} formalizes the Gaussian and spherical diagonal statistics and sets up the deflection-based design framework. \Cref{sec:results} presents the main theoretical results, comprising the optimal spectral designs, their finite-sample screening guarantees, and practical implementation. \Cref{sec:proofs} proves the design optimality results, while the remaining technical proofs are provided in the appendices. \Cref{sec:discussion} discusses the scope of the stage-one theory. Finally, \Cref{sec:sim} provides numerical results, and \Cref{sec:conclusion} concludes the paper.

\section{Preliminaries and Stage One Design Setup}\label{sec:preliminaries}
Building on the pipeline in \Cref{fig:initialization_pipeline}, we develop the statistical framework for stage-one screening design. We first characterize the Gaussian and spherical diagonal screening statistics, then formulate the associated deflection-based design criteria, and finally express the resulting optimization problems in matched Hilbert-space coordinates.

Throughout the paper, uppercase \(Y\) and \(T\) denote the scalar sensing statistics associated with the Gaussian and spherical models, respectively, while lowercase \(y\) and \(t\) denote deterministic arguments of the corresponding preprocessors. Unless otherwise stated, we adopt the normalization \(\|\bfx\|_2=1\), which is standard in phase retrieval~\cite{Luo2019Optimalspectralinitialization,Candes2015Phaseretrieval,Wu2021HadamardWirtingerflow}. This normalization is used for notational convenience; the general case follows by rescaling, and the unknown signal energy setting is addressed explicitly in \Cref{sec:unknown_norm}.

\subsection{Gaussian Stage-One Family and Diagonal Scores}
Stage One ranks coordinates using diagonal scores extracted from an empirical spectral matrix. We parameterize this matrix by a scalar preprocessor \(f\) through
\begin{equation}\label{eq:Df}
	\bfD_f := \frac{1}{m}\sum_{i=1}^m f(y_i)\bfa_i\bfa_i^*,
\end{equation}
where \(f:[0,\infty)\to\RR\) assigns weights to individual measurements. Preprocessed spectral matrices of the form \eqref{eq:Df} are standard in phase retrieval initialization~\cite{Candes2015Phaseretrieval, Gao2017Phaselessrecoveryusing} and also underlie sparse diagonal-screening methods~\cite{Wang2018Sparsephaseretrieval, Wu2021HadamardWirtingerflow, Cai2023Provablesampleefficienta}. In dense phase retrieval, such matrices are optimized for principal eigenvector extraction~\cite{Luo2019Optimalspectralinitialization, Mondelli2019Fundamentallimitsweak}. Accordingly, we adopt this parameterized family as our stage-one design space, seeking to optimize the preprocessor \(f\) specifically for diagonal screening.

To facilitate this optimization, we first characterize the statistical properties of \(\bfD_f\), which will enter the deflection quotients defined in the sequel. For Gaussian sensing, its expectation admits a highly tractable structure: rotational invariance yields an isotropic background plus a rank-one signal component. Let \(Y:=|\bfa_i^*\bfx|^2\sim\mathrm{Exp}(1)\) under \(\|\bfx\|_2=1\). The resulting decomposition is given as follows.

\begin{proposition}[Gaussian expected response]\label{prop:population}
	For every measurable \(f:[0,\infty)\to\RR\) with \(\EE[f(Y)^2]<\infty\),
	\begin{equation}\label{eq:population_matrix}
		\EE \bfD_f = \alpha_f \bfI_n + \beta_f \bfx\bfx^*
	\end{equation}
	with
	\begin{align}
		\label{eq:gaussian_alpha}\alpha_f & := \EE[f(Y)],      \\
		\label{eq:gaussian_beta}\beta_f   & := \EE[(Y-1)f(Y)].
	\end{align}
\end{proposition}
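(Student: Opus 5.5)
By linearity and the i.i.d.\ structure, $\EE\bfD_f=\EE[f(|\bfa^*\bfx|^2)\bfa\bfa^*]$ for a single standard complex Gaussian vector $\bfa$. The plan is to exploit unitary invariance and reduce everything to scalar moments of one coordinate.

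First choose a unitary $\mathbf{U}$ with $\mathbf{U}\bfx=\bfe_1$, which is possible since $\|\bfx\|_2=1$. Set $\bfz:=\mathbf{U}\bfa$. Because the complex Gaussian law is unitarily invariant, $\bfz$ is again standard complex Gaussian. Moreover $\bfa^*\bfx=\bfz^*\bfe_1=\bar z_1$, so $Y=|z_1|^2$. Writing $\bfa\bfa^*=\mathbf{U}^*\bfz\bfz^*\mathbf{U}$, the claim reduces to
\[
\mathbf{M}:=\EE[f(|z_1|^2)\bfz\bfz^*]=\alpha_f\bfI_n+\beta_f\bfe_1\bfe_1^*.
\]
Conjugating back gives \eqref{eq:population_matrix}, since $\mathbf{U}^*\bfe_1=\bfx$.

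Next, compute the entries of $\mathbf{M}$, using independence of $z_1,\dots,z_n$ with $\EE|z_j|^2=1$ and $\EE z_j=0$.
\begin{itemize}
\item For $j\neq1$, $\EE[f(|z_1|^2)|z_j|^2]=\EE f(Y)=\alpha_f$.
\item For an off-diagonal entry with $j\neq l$ and at least one index different from $1$, say $l\neq1$, the factor $\bar z_l$ is independent of the rest and has mean zero. If both indices differ from $1$, the entry is $\EE f\cdot\EE z_j\EE\bar z_l=0$. If $j=1$, the entry is $\EE[f(|z_1|^2)z_1]\EE\bar z_l=0$. The case $l=1$, $j\neq1$ is symmetric.
\item For the $(1,1)$ entry, $\EE[f(Y)Y]=\alpha_f+\EE[(Y-1)f(Y)]=\alpha_f+\beta_f$.
\end{itemize}
Together these give the stated decomposition. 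Here $Y=|z_1|^2\sim\mathrm{Exp}(1)$, which matches the statement.

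The only delicate step is integrability, since $f$ is assumed only to satisfy $\EE f(Y)^2<\infty$. I would handle it with Cauchy--Schwarz:
\[
\EE|f(Y)z_jz_l|\le \sqrt{\EE f(Y)^2}\,\sqrt{\EE|z_jz_l|^2},
\]
which is finite because Gaussians have all moments. This justifies splitting the expectations and using Fubini and independence in each entry. It also makes $\beta_f$ well defined, since $\EE[(Y-1)^2]=1$.
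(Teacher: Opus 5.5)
Your proof is correct and follows essentially the same route as the paper. Both arguments rest on the unitary invariance of $\mathcal{CN}(0,\bfI_n)$ and the independence of the component along $\bfx$ from the orthogonal components, and both read off the coefficients from $\EE f(Y)$ and $\EE[Yf(Y)]=\alpha_f+\beta_f$. The only difference is how the form $\alpha\bfI_n+\beta\bfx\bfx^*$ is obtained. The paper gets it abstractly, from the fact that $\EE\bfD_f$ commutes with every unitary fixing $\bfx$. You instead rotate $\bfx$ to $\bfe_1$ and check every entry directly, including the vanishing off-diagonal entries and the Cauchy--Schwarz integrability step, which the paper leaves implicit.
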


The proof is given in \Cref{appendices:A}. This decomposition isolates the two quantities relevant to diagonal screening: the isotropic coefficient \(\alpha_f\) represents the background level of the operator, whereas \(\beta_f\) captures the first-order signal gain. Since the background level \(\alpha_f\) is common to all coordinates, subtracting \(\alpha_f\bfI_n\) removes this shared background, motivating the diagonal score
\begin{equation}\label{eq:gaussian_diagonal_score}
	s_j(f) := \bfe_j^*(\bfD_f-\alpha_f\bfI_n)\bfe_j, \qquad j=1,\dots,n.
\end{equation}
By linearity, \Cref{prop:population} implies
\begin{equation}\label{eq:population_score_identity}
	\EE s_j(f) = \beta_f |x_j|^2.
\end{equation}
Since this expected score vanishes for non-support coordinates and equals \(\beta_f |x_j|^2\) for support coordinates, it directly captures the coordinate separation central to stage-one screening, making the statistic \(s_j(f)\) our primary object of study. Because replacing a preprocessor by its negative leaves the design quotients unchanged and flips all scores, we orient the preprocessor so that \(\beta_f\ge 0\) and rank coordinates by the largest scores \(s_j(f)\). This orientation convention applies throughout the paper.

While the mean scores depend only on \(\alpha_f\) and \(\beta_f\), the reliability of support identification depends on the background fluctuations. Let \(S:=\{j:x_j\ne0\}\) denote the support of \(\bfx\). For a non-support coordinate \(j\notin S\), the diagonal score exhibits centered fluctuations with an exact variance
\begin{equation}\label{eq:gaussian_nonsupport_variance}
	\operatorname{Var}(s_j(f))
	=
	\frac{1}{m}\left(2\EE[f(Y)^2]-\alpha_f^2\right).
\end{equation}
This identity follows from \(\EE|a_j|^4=2\) and the independence of the coordinate measurement from the signal projection when \(x_j=0\). Together, the expected response and the null variance fully characterize the Gaussian screening rule, combining to form the deflection quotient in \Cref{sec:deflection}.

\subsection{Spherical Normalization and the Spherical Model}
The unconstrained Gaussian design problem is analytically tractable, but finite-sample concentration for its empirical diagonal matrix must handle the unbounded radial factor \(\|\bfa_i\|_2^2\). This radial component can dominate concentration bounds even though the coordinate information used for support screening is merely directional. We therefore also study a bounded, parameter-free direction-only counterpart. Compared with clipping or truncation, spherical normalization preserves rotational symmetry and introduces no tuning parameter. Below, we characterize the spherical response and variance, which will enter the spherical deflection quotients defined in~\Cref{sec:deflection}.

Define
\begin{equation}\label{eq:spherical_definition}
	t_i := \frac{y_i}{\|\bfa_i\|_2^2},
	\qquad
	\bfu_i := \frac{\bfa_i}{\|\bfa_i\|_2},
\end{equation}
so that \(0\leq t_i\leq 1\), and let
\begin{equation}\label{eq:spherical_general_operator}
	\bfD_\phi
	:=
	\frac{1}{m}\sum_{i=1}^m \phi(t_i)\bfu_i\bfu_i^*.
\end{equation}
Write
\begin{equation}\label{eq:spherical_directional_statistic}
	T:=|\bfu^*\bfx|^2,
\end{equation}
where \(\bfu\) is uniform on the complex unit sphere.

The scalar \(T\) measures the alignment between the sensing direction and the true signal. For \(\bfu\) uniform on the complex unit sphere, \(T\sim\mathrm{Beta}(1,n-1)\), with density
\begin{equation}
	p_n(t)=(n-1)(1-t)^{n-2},
	\qquad 0\le t\le 1,
\end{equation}
and \(\EE T=1/n\). Under spherical normalization, the same structure as in the Gaussian model is preserved, namely, a background component plus a rank-one signal, with the following expected response.

\begin{proposition}[Spherical expected response]\label{prop:spherical_population}
	Let \(\phi:[0,1]\to\RR\) satisfy \(\EE[\phi(T)^2]<\infty\). Then
	\begin{equation}\label{eq:spherical_population}
		\EE \bfD_\phi = \alpha_\phi \bfI_n + \beta_\phi \bfx\bfx^*
	\end{equation}
	with
	\begin{align}
		\label{eq:spherical_alpha} \alpha_\phi
		 & =
		\frac{1}{n-1}\EE[(1-T)\phi(T)], \\
		\label{eq:spherical_beta}\beta_\phi
		 & =
		\frac{n}{n-1}\EE\left[\left(T-\frac{1}{n}\right)\phi(T)\right].
	\end{align}
\end{proposition}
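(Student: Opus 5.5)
We prove \Cref{prop:spherical_population} by rotational invariance, then identify the two coefficients through two scalar projections of the expected matrix. Since \(\bfD_\phi\) is an average of i.i.d.\ terms, \(\EE\bfD_\phi=\EE[\phi(T)\bfu\bfu^*]\) with \(\bfu\) uniform on the complex unit sphere and \(T=|\bfu^*\bfx|^2\). The integrability assumption \(\EE[\phi(T)^2]<\infty\) and \(\|\bfu\bfu^*\|\le1\) guarantee that this expectation is well defined.

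\textbf{Step 1: structure of the expectation.} Let \(\bfM:=\EE[\phi(T)\bfu\bfu^*]\), and let \(\mathbf{U}\) be any unitary matrix with \(\mathbf{U}\bfx=\bfx\). The uniform law of \(\bfu\) is invariant under \(\bfu\mapsto\mathbf{U}\bfu\), and \(|(\mathbf{U}\bfu)^*\bfx|^2=|\bfu^*\mathbf{U}^*\bfx|^2=T\). Hence \(\mathbf{U}\bfM\mathbf{U}^*=\bfM\) for every such \(\mathbf{U}\).
\begin{itemize}
	\item Taking \(\mathbf{U}=\bfI-(1-\rme^{\mathbf{i}\theta})\bfx\bfx^*\) shows that \(\bfM\) maps \(\bfx\) into the span of \(\bfx\) and preserves \(\bfx^\perp\).
	\item Unitaries acting on \(\bfx^\perp\) and fixing \(\bfx\) then force \(\bfM|_{\bfx^\perp}\) to be a scalar multiple of the identity, by Schur's lemma, or simply because such unitaries act transitively on unit vectors of \(\bfx^\perp\).
\end{itemize}
Therefore \(\bfM=\alpha\bfI_n+\beta\bfx\bfx^*\) for some scalars \(\alpha\) and \(\beta\). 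Here \(\alpha+\beta=\bfx^*\bfM\bfx\) and \(n\alpha+\beta=\operatorname{tr}\bfM\). When \(n=1\) the formulas degenerate, so we assume \(n\ge2\) as implicit in the density \(p_n\).

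\textbf{Step 2: evaluate the two projections.} Since \(\|\bfu\|_2=1\), we have \(\operatorname{tr}\bfM=\EE[\phi(T)]\). Using \(\|\bfx\|_2=1\), we have \(\bfx^*\bfM\bfx=\EE[T\phi(T)]\). Solving the resulting \(2\times2\) system gives
\[
\alpha=\frac{\EE[\phi(T)]-\EE[T\phi(T)]}{n-1}=\frac{1}{n-1}\EE[(1-T)\phi(T)],
\]
\[
\beta=\EE[T\phi(T)]-\alpha=\frac{n\,\EE[T\phi(T)]-\EE[\phi(T)]}{n-1}=\frac{n}{n-1}\EE\!\left[\left(T-\frac1n\right)\phi(T)\right].
\]
These coincide with \eqref{eq:spherical_alpha} and \eqref{eq:spherical_beta}.

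The only step requiring care is the invariance argument in Step 1, specifically justifying that the stabilizer of \(\bfx\) in the unitary group acts irreducibly on \(\bfx^\perp\). If a more hands-on route is preferred, one can instead compute matrix entries directly in a basis whose first vector is \(\bfx\).
\begin{itemize}
	\item Off-diagonal entries \(\EE[\phi(T)u_ju_k^*]\) vanish by the independent phase symmetry \(u_j\mapsto\rme^{\mathbf{i}\theta}u_j\), which leaves \(T=|u_1|^2\) unchanged.
	\item The diagonal entries for \(j\ge2\) are all equal by exchangeability of \(u_2,\dots,u_n\).
\end{itemize}
This yields the same form, and Step 2 then finishes the proof without any Beta-density integrals.
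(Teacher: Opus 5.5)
Your argument is correct and follows essentially the paper's route: rotational invariance forces $\EE\bfD_\phi=\alpha\bfI_n+\beta\bfx\bfx^*$, and $\bfx^*\bfM\bfx=\EE[T\phi(T)]$ fixes $\alpha+\beta$. The only difference is that you get the second equation from $\operatorname{tr}\bfM=\EE[\phi(T)]$, whereas the paper uses the conditional identity $\EE[|\bfv^*\boldsymbol{\omega}|^2\mid T=t]=(1-t)/(n-1)$ for $\bfv\perp\bfx$; yours is slightly cleaner because it makes no claim about the conditional law of the orthogonal component.

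One small slip: $\bfI-(1-\rme^{\mathbf{i}\theta})\bfx\bfx^*$ sends $\bfx$ to $\rme^{\mathbf{i}\theta}\bfx$ rather than fixing it. The invariance $\mathbf{U}\bfM\mathbf{U}^*=\bfM$ still holds because $T$ is phase-invariant, and alternatively $2\bfx\bfx^*-\bfI_n$ does fix $\bfx$ and already kills the off-diagonal blocks.
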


The proof is given in \Cref{appendices:A}. Compared with the Gaussian response, spherical normalization changes the exact forms of $\alpha_\phi$ and $\beta_\phi$, but their roles in diagonal screening remain the same. We therefore define the spherical score
\begin{equation}\label{eq:spherical_diagonal_score}
	s_j(\phi) := \bfe_j^*(\bfD_\phi-\alpha_\phi\bfI_n)\bfe_j, \qquad j=1,\dots,n,
\end{equation}
which satisfies \(\EE s_j(\phi) = \beta_\phi |x_j|^2\) by linearity. The same orientation convention applies.

After the expected contrast has been identified, the remaining ingredient for the deflection criterion is the null fluctuation scale. The spherical counterpart to \eqref{eq:gaussian_nonsupport_variance} is explicit and follows from standard spherical moment identities:
\begin{equation}\label{eq:spherical_nonsupport_variance}
	\operatorname{Var}(s_j(\phi))
	=
	\frac{1}{m}
	\left[
		\frac{2}{n(n-1)}
		\EE\big[(1-T)^2\phi(T)^2\big]
		-
		\alpha_\phi^2
		\right].
\end{equation}
This identity supplies the null fluctuation scale needed by the deflection coefficient, completing the characterization of the spherical screening rule.

\subsection{Stage-One Design Criterion}\label{sec:deflection}
As stated in \Cref{sec:introduction}, we formulate the stage-one design using the deflection coefficient from detection theory~\cite{Kay1998Fundamentalsstatisticalsignal, Picinbono1995deflectionperformance, Poor1994introductionsignaldetection,Picinbono1988Optimallinearquadraticsystems}, which measures the mean separation relative to null fluctuations. For a generic scalar statistic \(z\) used to decide between a null hypothesis \(H_0\) and an alternative \(H_1\), it is defined as
\begin{equation}
	\mathcal D(z;H_1,H_0)
	:=
	\frac{\left(\EE[z\mid H_1]-\EE[z\mid H_0]\right)^2}
	{\operatorname{Var}(z\mid H_0)}.
\end{equation}
In our setting, each diagonal score \(s_j\) plays the role of such a statistic, with \(H_1:j\in S\) and \(H_0:j\notin S\). We therefore design the preprocessor by maximizing the coordinate-wise deflection coefficient over the admissible family.

For a coordinate \(j\in S\), the deflection coefficient of the scalar score \(s_j(f)\) relative to a non-support coordinate \(\ell\notin S\) is given by
\begin{equation}
	\frac{\left(\EE s_j(f)-\EE s_\ell(f)\right)^2}{\operatorname{Var}(s_\ell(f))}.
\end{equation}
For the Gaussian family, \eqref{eq:population_score_identity} and \eqref{eq:gaussian_nonsupport_variance} imply that this quantity admits the factorization
\begin{equation}
	\frac{\left(\EE s_j(f)-\EE s_\ell(f)\right)^2}{\operatorname{Var}(s_\ell(f))}
	=
	m|x_j|^4\, Q_f,
\end{equation}
where
\begin{equation}\label{eq:gaussian_deflection_quotient}
	Q_f
	:=
	\frac{\beta_f^2}{\nu_f},
	\qquad
	\nu_f:=2\EE[f(Y)^2]-\alpha_f^2.
\end{equation}
Thus, \(Q_f\) captures the design-dependent component of the coordinate-wise deflection coefficient. The Gaussian design theorem in Section~\ref{sec:results} is therefore built directly upon maximizing support/non-support mean separation relative to the non-support variance.

For the spherical family, \Cref{prop:spherical_population} and \eqref{eq:spherical_nonsupport_variance} yield the exact spherical deflection quotient
\begin{equation}\label{eq:spherical_exact_deflection_quotient}
	Q_\phi
	:=
	\frac{\beta_\phi^2}{\nu_\phi},
	\qquad
	\nu_\phi
	:=
	\frac{2}{n(n-1)}
	\EE\big[(1-T)^2\phi(T)^2\big]
	-
	\alpha_\phi^2.
\end{equation}
This exact quotient is the direct deflection objective induced by the spherical null variance, serving as the reference design criterion for the spherical model.

\subsection{Hilbert-Space Formulation and Matched Coordinates}\label{sec:orthogonal_bases}
The stage-one design problem is a functional optimization problem: the numerator of each design quotient is a squared linear functional of the weight, while the denominator is a quadratic form determined by the corresponding null fluctuation or energy. We therefore formulate the problem in the \(L^2\) Hilbert space associated with the relevant scalar sensing statistic, \(Y\) in the Gaussian case and \(T\) in the spherical case.

To make this Hilbert-space structure explicit, we introduce matched orthogonal polynomial coordinates. They identify the low-order component that carries the signal gain and separate it from the constant and higher-order components that contribute to the denominator, thereby clarifying the structure of the optimization problems and facilitating the subsequent analysis.

For the Gaussian model, the scalar statistic is \(Y\sim\mathrm{Exp}(1)\), so the matched Hilbert space is
\begin{equation}
	L^2([0,\infty),e^{-y}\rmd y)
	:=
	\left\{
	f:
	\int_0^\infty f(y)^2 e^{-y}\rmd y<\infty
	\right\},
\end{equation}
equipped with inner product
\begin{equation}
	\langle f,g\rangle_{\mathrm G}
	:=
	\int_0^\infty f(y)g(y)e^{-y}\rmd y.
\end{equation}
The orthogonal polynomial family for this exponential weight is the Laguerre basis \(\{L_\ell\}_{\ell\ge 0}\)~\cite{Szego1975Orthogonalpolynomials, Andrews2015Specialfunctions}, characterized by
\begin{equation}
	\int_0^\infty L_\ell(y)L_m(y)e^{-y}\rmd y = 0,
	\qquad \ell\neq m.
\end{equation}
Its first two modes are
\begin{equation}\label{eq:Laguerre_modes}
	L_0(y)=1,
	\qquad
	L_1(y)=1-y.
\end{equation}
With this standard Laguerre convention, the positively oriented linear weight is \(-L_1(y)=y-1\).
Accordingly, every \(f\in L^2([0,\infty),e^{-y}\rmd y)\) admits a mean-square expansion \(f=\sum_{\ell\ge 0} c_\ell L_\ell\). The constant mode \(L_0\) changes only the isotropic offset \(\alpha_f\), while
\begin{equation}
	\beta_f
	=
	\EE[(Y-1)f(Y)]
	=
	\langle y-1,f\rangle_{\mathrm G}
	=
	-\langle L_1,f\rangle_{\mathrm G}.
\end{equation}
Consequently, \(\beta_f\) is determined by the first nonconstant Laguerre projection, whereas \(\EE[f(Y)^2]=\langle f,f\rangle_{\mathrm G} = \sum_{\ell\ge 0} c_\ell^2\) records the total energy of all modes. This coefficient representation isolates the Gaussian design tradeoff: the numerator depends on the first nonconstant projection, while the denominator also accounts for the remaining energy.

The same construction applies to the spherical model. Here the scalar statistic is \(T\sim\mathrm{Beta}(1,n-1)\), with density
\begin{equation}
	p_n(t)=(n-1)(1-t)^{n-2}, \qquad t\in[0,1].
\end{equation}
The matched Hilbert space is
\begin{equation}
	L^2([0,1],p_n(t)\rmd t)
	:=
	\left\{
	\phi:
	\int_0^1 \phi(t)^2 p_n(t)\rmd t<\infty
	\right\},
\end{equation}
equipped with inner product
\begin{equation}
	\langle \phi,\psi\rangle_{\mathrm{sph}}
	:=
	\int_0^1 \phi(t)\psi(t)p_n(t)\rmd t.
\end{equation}

The orthogonal polynomial family associated with the Beta weight \(p_n(t)\propto (1-t)^{n-2}\) is given by the shifted Jacobi system~\cite{Szego1975Orthogonalpolynomials, Andrews2015Specialfunctions}
\begin{equation}
	\left\{
	P_\ell^{(0,n-2)}(1-2t)
	\right\}_{\ell\ge 0},
\end{equation}
obtained by mapping the standard Jacobi polynomials on \([-1,1]\) to \([0,1]\) via \(x=1-2t\). We normalize the first two modes to have unit norm with respect to \(\langle \cdot,\cdot\rangle_{\mathrm{sph}}\). Since \(p_n\) is a probability density, the constant mode is already normalized, and writing
\begin{equation}\label{eq:sph-sigman}
	\sigma_n^2
	:=
	\EE\left[\left(T-\frac{1}{n}\right)^2\right],
\end{equation}
the first two normalized modes are
\begin{equation}\label{eq:Jacobi_modes}
	J_0(t)=1,
	\qquad
	J_1(t)=\frac{t-\frac{1}{n}}{\sigma_n}.
\end{equation}
Since
\begin{equation}\label{eq:beta_phi_Jacobi}
	\beta_\phi
	=
	\frac{n}{n-1}
	\EE\!\left[\left(T-\frac{1}{n}\right)\phi(T)\right]
	=
	\frac{n \sigma_n}{n-1}
	\langle J_1,\phi\rangle_{\mathrm{sph}},
\end{equation}
\(\beta_\phi\) is again determined by the first nonconstant shifted-Jacobi projection.
These matched coordinates provide the coefficient-level representation used to derive the optimal Gaussian and spherical designs in the next section.

\section{Main Results}\label{sec:results}

We present the main results in three steps: we first solve the Gaussian design problem at the level of expectations, then derive a spherical counterpart, and finally establish the corresponding finite-sample guarantees and practical extensions.

\subsection{Unconstrained Gaussian Design}\label{sec:gaussian_design}
We consider the Gaussian stage-one design problem under the exact deflection criterion \(Q_f\) from Section~\ref{sec:preliminaries}. The following theorem characterizes the optimal design.

\begin{theorem}[Gaussian exact deflection optimum]\label{thm:optimal_preprocessor}
	Among all real-valued \(f\in L^2([0,\infty),e^{-y}\rmd y)\) with \(\nu_f>0\), the exact Gaussian deflection quotient \(Q_f\) in \eqref{eq:gaussian_deflection_quotient} is maximized if and only if
	\begin{equation}\label{eq:optimal_preprocessor}
		f^\star(y)=c(y-1)
	\end{equation}
	for some positive constant \(c>0\).
\end{theorem}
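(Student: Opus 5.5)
We work in the Laguerre coordinates of \Cref{sec:orthogonal_bases}. Every $f\in L^2([0,\infty),e^{-y}\rmd y)$ decomposes orthogonally as
\[
f=a\,L_0+b\,(-L_1)+g,\qquad g\perp\operatorname{span}\{1,\,y-1\},
\]
with $a=\alpha_f=\EE[f(Y)]$. The mode $-L_1=y-1$ has unit norm because $\operatorname{Var}(Y)=1$. By the identity $\beta_f=\langle y-1,f\rangle_{\mathrm G}$, we get $\beta_f=b$. Writing $\|g\|^2:=\langle g,g\rangle_{\mathrm G}$, Parseval gives $\EE[f(Y)^2]=a^2+b^2+\|g\|^2$.

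Substituting into \eqref{eq:gaussian_deflection_quotient} yields
\[
\nu_f=2(a^2+b^2+\|g\|^2)-a^2=a^2+2b^2+2\|g\|^2,\qquad Q_f=\frac{b^2}{a^2+2b^2+2\|g\|^2}.
\]
The admissibility condition $\nu_f>0$ just excludes $f=0$ a.e.

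From this expression, $Q_f\le 1/2$ for every admissible $f$. Equality requires $b\neq0$, $a=0$ and $\|g\|=0$. When $b=0$ we have $Q_f=0<1/2$, so the value $1/2$ is strictly better. Hence the maximizers are exactly $f=b(y-1)$ with $b\neq0$. The orientation convention $\beta_f=b\ge0$ then restricts these to $c=b>0$. This gives the "if and only if" claim, with maximal value $Q_{f^\star}=1/2$.

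Two points need checking. First, the Laguerre polynomials are complete in $L^2(e^{-y}\rmd y)$, so the remainder $g$ is a genuine orthogonal remainder. In fact completeness is not even needed: it suffices to project onto the two-dimensional span of $\{1,\,y-1\}$ and apply the Pythagorean identity. Second, $f=c(y-1)$ with $c<0$ yields the same $Q_f$ but violates the sign convention; the statement's "positive constant" should be read as that convention, equivalently uniqueness up to sign. I do not expect any step to be a real obstacle, since the argument is a single Cauchy--Schwarz/Parseval bound with an explicit equality case.
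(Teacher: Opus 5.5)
Your proposal is correct and follows essentially the same argument as the paper. Both decompose $f$ orthogonally along the first two Laguerre modes plus a remainder, and both reduce the quotient to $Q_f=b^2/(a^2+2b^2+2\langle g,g\rangle_{\mathrm G})$, read off the equality case $a=0$, $g=0$, $b\neq 0$, and fix the sign of $c$ through the orientation convention $\beta_f\ge 0$. Your use of $-L_1=y-1$ instead of $L_1$, and your remark that only the projection onto $\operatorname{span}\{1,\,y-1\}$ is needed rather than completeness, are harmless refinements.
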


\Cref{thm:optimal_preprocessor} completely solves the Gaussian stage-one design problem under the exact deflection criterion, establishing that the centered linear preprocessor is uniquely optimal. The proof is given in \Cref{sec:proof_thm_optimal_preprocessor}. Next, we introduce the spherical counterpart.

\subsection{Spherical Design via Normalization}\label{sec:spherical_design}
To obtain a bounded direction-only redesign, we normalize away the random sensing vector norms and solve the induced design problem defined in \eqref{eq:spherical_definition}-\eqref{eq:spherical_directional_statistic}. In this spherical model, the exact non-support variance quotient remains explicit.

\begin{theorem}[Spherical exact optimum]\label{thm:spherical_jacobi}
	Let \(n\ge 6\). Within the Hilbert space \(L^2([0,1],p_n(t)\rmd t)\), the exact spherical deflection quotient \(Q_\phi\) in \eqref{eq:spherical_exact_deflection_quotient} (over \(\nu_\phi>0\)) is maximized if and only if
	\begin{equation}\label{eq:spherical_exact_deflection_profile}
		\phi^\star(t)
		=
		c\left[
			\frac{t-\frac{1}{n}}{(1-t)^2}
			+
			\frac{1}{(n-2)^2(1-t)}
			\right]
	\end{equation}
	for some positive constant \(c>0\).
\end{theorem}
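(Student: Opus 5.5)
The plan is to treat the problem as a generalized Rayleigh quotient in \(L^2([0,1],p_n(t)\rmd t)\) and solve it by Cauchy--Schwarz in the inner product induced by the denominator. Write \(g(t):=t-\tfrac1n\), \(h(t):=1-t\) and \(w(t):=(1-t)^2\). By \Cref{prop:spherical_population}, \(\beta_\phi=\tfrac{n}{n-1}\langle g,\phi\rangle_{\mathrm{sph}}\). By \eqref{eq:spherical_exact_deflection_quotient},
\begin{equation*}
\nu_\phi=\langle \phi,\mathcal A\phi\rangle_{\mathrm{sph}},\qquad
\mathcal A:=a\,M_w-b\,h\langle h,\cdot\rangle_{\mathrm{sph}},
\end{equation*}
where \(a:=\tfrac{2}{n(n-1)}\), \(b:=\tfrac{1}{(n-1)^2}\), and \(M_w\) denotes multiplication by \(w\). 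Up to the constant \((n/(n-1))^2\), we must therefore maximize \(\langle g,\phi\rangle^2/\langle\phi,\mathcal A\phi\rangle\).

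The first step is to show that \(\mathcal A\) is positive definite on the domain where \(\nu_\phi\) is finite. Cauchy--Schwarz gives \(\langle h,\phi\rangle^2=\EE[(1-T)\phi(T)]^2\le\EE[(1-T)^2\phi(T)^2]\). Hence
\begin{equation*}
\nu_\phi\ge(a-b)\,\EE[(1-T)^2\phi(T)^2],\qquad a-b=\frac{n-2}{n(n-1)^2}>0,
\end{equation*}
so \(\nu_\phi>0\) whenever \(\phi\neq0\) a.e. This makes \([\phi,\psi]:=\langle\phi,\mathcal A\psi\rangle\) a genuine inner product on \(\{\phi:\EE[(1-T)^2\phi^2]<\infty\}\).

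The second step is to find a representer \(\phi^\star\) with \(\mathcal A\phi^\star=g\), i.e.\ \(\langle g,\phi\rangle=[\phi^\star,\phi]\) for all admissible \(\phi\). Then Cauchy--Schwarz in \([\cdot,\cdot]\) gives \(\langle g,\phi\rangle^2\le[\phi^\star,\phi^\star]\,[\phi,\phi]\). Equality holds iff \(\phi\) is proportional to \(\phi^\star\), and the orientation \(\beta_\phi\ge0\) forces a positive multiple. This yields both optimality and uniqueness.

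To solve \(\mathcal A\phi^\star=g\), I use the Sherman--Morrison ansatz \(\phi^\star=\tfrac1a\big(M_w^{-1}g+\lambda M_w^{-1}h\big)\). Here \(M_w^{-1}g=\tfrac{t-1/n}{(1-t)^2}\) and \(M_w^{-1}h=\tfrac1{1-t}\). Substituting and matching the coefficient of \(h\) leaves a scalar linear equation for \(\lambda\), which involves two moments:
\begin{equation*}
\langle h,M_w^{-1}h\rangle=1,\qquad
\langle h,M_w^{-1}g\rangle=\EE\Big[\frac{T-1/n}{1-T}\Big]=\frac{1}{n(n-2)},
\end{equation*}
where the second uses \(\EE[(1-T)^{-1}]=\tfrac{n-1}{n-2}\). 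Solving gives
\begin{equation*}
\lambda=\frac{b}{a-b}\cdot\frac{1}{n(n-2)}=\frac{n}{n-2}\cdot\frac1{n(n-2)}=\frac1{(n-2)^2},
\end{equation*}
which reproduces \eqref{eq:spherical_exact_deflection_profile}.

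The main obstacle is justifying that \(\phi^\star\) lies in \(L^2([0,1],p_n(t)\rmd t)\), so that it is admissible and the Riesz-type identity \(\langle g,\phi\rangle=[\phi^\star,\phi]\) holds for every admissible \(\phi\). Near \(t=1\), \(\phi^\star\) behaves like \((1-t)^{-2}\), so \(\int\phi^{\star2}p_n\) reduces to \(\int(1-t)^{n-6}\rmd t\). This is finite exactly when \(n-6>-1\), which is where the hypothesis \(n\ge6\) enters. The identity itself then follows by direct expansion: \(w\phi^\star\) and \(h\) are bounded, so all the pairings involved are finite.

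A secondary issue is that the quotient is defined over \(\phi\in L^2(p_n)\), while \(\nu_\phi\) involves only \(\EE[(1-T)^2\phi^2]\le\EE\phi^2\). The Cauchy--Schwarz argument should therefore be run on the larger weighted space and then restricted to \(L^2(p_n)\). Since \(\phi^\star\) belongs to the smaller space, the supremum is attained there and uniqueness is inherited.
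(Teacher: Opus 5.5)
Your proposal is correct and is essentially the paper's argument: both treat $Q_\phi$ as a generalized Rayleigh quotient whose denominator is a rank-one perturbation of a weighted energy, invert that perturbation in closed form using the moment $\EE[(T-\tfrac1n)/(1-T)]=\tfrac{1}{n(n-2)}$, and invoke $n\ge 6$ to place the maximizer in $L^2([0,1],p_n(t)\rmd t)$. The only difference is cosmetic: the paper first substitutes $g=(1-t)\phi$, so its operator becomes $I-\frac{n}{2(n-1)}P_1$ with inverse $I+\frac{n}{n-2}P_1$, whereas you stay in $\phi$-coordinates with $aM_w-b\,h\langle h,\cdot\rangle_{\mathrm{sph}}$ (your $h=1-t$) and spell out the positivity via Cauchy--Schwarz and the equality-case argument, which the paper only asserts.
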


While \Cref{thm:spherical_jacobi} completes the exact optimal design in the spherical setting, the optimizer \(\phi^\star\) exhibits a boundary singularity at \(t=1\). The requirement \(n\ge 6\) ensures that this singularity remains square-integrable against the Beta measure \(p_n(t)\rmd t\), keeping it strictly within the \(L^2\) space. In practice, since \(t_i = y_i/\|\bfa_i\|_2^2\), a single sensing vector aligned closely with the signal can yield \(t_i \approx 1\), causing the corresponding coordinate score to diverge, posing challenges for finite-sample screening.

To resolve this issue, we propose a bounded surrogate preprocessor:
\begin{equation}\label{eq:canonical_surrogate_phi}
	\phi^\dagger(t) = nt-1.
\end{equation}
This linear preprocessor is smooth, parameter-free, and globally bounded on \([0,1]\). Crucially, \(\phi^\dagger\) is the unique maximizer of the surrogate deflection quotient, defined as follows.

\begin{proposition}[Spherical surrogate optimum]\label{prop:spherical_surrogate_optimum}
	Within the Hilbert space \(L^2([0,1],p_n(t)\rmd t)\), the surrogate quotient
	\begin{equation}\label{eq:spherical_design_quotient}
		\widetilde Q_\phi
		:=
		\frac{\beta_\phi^2}{\gamma_\phi},
		\qquad
		\gamma_\phi := \EE[\phi(T)^2]
	\end{equation}
	(over \(\gamma_\phi>0\)) is maximized if and only if \(\phi^\dagger(t)=c(nt-1)\) for some positive constant \(c>0\).
\end{proposition}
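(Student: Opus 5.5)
The plan is to apply Cauchy--Schwarz in \(L^2([0,1],p_n(t)\rmd t)\) and then settle the equality case. By \eqref{eq:beta_phi_Jacobi}, the numerator is the square of a single linear functional of \(\phi\):
\begin{equation*}
	\beta_\phi=\frac{n\sigma_n}{n-1}\langle J_1,\phi\rangle_{\mathrm{sph}},
\end{equation*}
where \(J_1\) has unit norm by construction. The denominator is simply the squared norm, \(\gamma_\phi=\EE[\phi(T)^2]=\langle\phi,\phi\rangle_{\mathrm{sph}}\). Unlike the exact quotient, there is no \(\alpha_\phi^2\) correction and no extra weight \((1-T)^2\), so no change of inner product is needed. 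This is what makes the surrogate simpler than \Cref{thm:spherical_jacobi}.

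The key steps, in order, are as follows.

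First, note that any \(\phi\) with \(\gamma_\phi>0\) gives a finite quotient. By Cauchy--Schwarz,
\begin{equation*}
	\langle J_1,\phi\rangle_{\mathrm{sph}}^2\le \|J_1\|_{\mathrm{sph}}^2\|\phi\|_{\mathrm{sph}}^2=\gamma_\phi,
\end{equation*}
and therefore
\begin{equation*}
	\widetilde Q_\phi\le \frac{n^2\sigma_n^2}{(n-1)^2}.
\end{equation*}

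Second, show the bound is attained. Take \(\phi=J_1\), or any positive multiple of it. Since \(J_1\propto t-\tfrac1n\propto nt-1\), the choice \(\phi^\dagger\) attains the bound. One could also expand \(\phi=\sum_\ell c_\ell J_\ell\) in the normalized shifted-Jacobi basis of \Cref{sec:orthogonal_bases}. Then \(\widetilde Q_\phi\propto c_1^2/\sum_\ell c_\ell^2\), which makes it transparent that every non-\(J_1\) mode only inflates the denominator; this mirrors the Gaussian argument.

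Third, characterize the equality case, which is the only delicate point. Equality in Cauchy--Schwarz holds if and only if \(\phi=\lambda J_1\) \(p_n\)-a.e. for some scalar \(\lambda\). The case \(\lambda=0\) is excluded by \(\gamma_\phi>0\). The sign of \(\lambda\) is then fixed by the orientation convention \(\beta_\phi\ge0\): this forces \(\beta_\phi>0\), hence \(\lambda>0\). Because the maximizer is identified only as an element of \(L^2\), the statement \(\phi^\dagger=c(nt-1)\) should be read as equality \(p_n\)-a.e. Since \(p_n>0\) on \([0,1)\), this is essentially pointwise equality.

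I also want a quick check that \(\sigma_n>0\), so that \(J_1\) is well defined. This holds because \(T\) is nondegenerate; explicitly, \(\sigma_n^2=\operatorname{Var}(T)=\frac{n-1}{n^2(n+1)}\). It also gives the optimal value \(\widetilde Q_{\phi^\dagger}=1/(n^2-1)\), which may be useful to record for later comparison with \(Q_{\phi^\star}\).
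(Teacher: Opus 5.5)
Your proposal is correct and is essentially the paper's own argument. The paper decomposes \(\phi=aJ_0+bJ_1+g\) and reads off \(\widetilde Q_\phi\propto b^2/(a^2+b^2+\langle g,g\rangle_{\mathrm{sph}})\); this is the Pythagorean form of your Cauchy--Schwarz bound and its equality case, and the orientation convention fixes \(c>0\) in the same way, while your side computations \(\sigma_n^2=\frac{n-1}{n^2(n+1)}\) and \(\widetilde Q_{\phi^\dagger}=1/(n^2-1)\) are also correct.
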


Proposition~\ref{prop:spherical_surrogate_optimum} shows that \(\phi^\dagger\) is the optimal design under the surrogate deflection quotient \(\widetilde{Q}_\phi\), which replaces the exact null variance in the denominator by the \(L^2\)-energy \(\gamma_\phi\). The two optimizers also differ structurally in their Jacobi expansions. The exact optimizer \(\phi^\star\) incorporates higher-order modes to counteract the \((1-t)^2\) weighting in the exact variance denominator, whereas the surrogate optimizer \(\phi^\dagger(t)=nt-1\) retains only the leading nonconstant mode \(J_1\). As confirmed numerically in \Cref{sec:exp3}, these higher-order corrections have negligible impact on the stage-one screening performance. The proofs of \Cref{thm:spherical_jacobi} and \Cref{prop:spherical_surrogate_optimum} are given in \Cref{sec:proof_thm_spherical_jacobi}.

The surrogate optimum \(\phi^\dagger\) also admits a direct connection to the unconstrained Gaussian optimum. Recall from \Cref{thm:optimal_preprocessor} that the Gaussian-optimal preprocessor is \(f^\star(y)=y-1\). Since the spherical normalization discards the radial information and retains only the directional statistic \(T\), we analyze the projection of this Gaussian optimum onto the spherical model. Specifically, the conditional expectation \(\EE[f^\star(Y)\mid T]=\EE[Y-1\mid T]\) characterizes the projection of the Gaussian-optimal preprocessor onto the subspace of functions of the spherical directional statistic. To evaluate this projection, write \(\bfa=R^{1/2}\bfu\) with \(R=\|\bfa\|_2^2\) and \(\bfu=\bfa/\|\bfa\|_2\). By the rotational invariance of \(\mathcal{CN}(0,\bfI_n)\), the squared norm \(R\) is independent of the direction \(\bfu\), with \(\EE[R]=n\). Since \(Y=|\bfa^*\bfx|^2=R\cdot|\bfu^*\bfx|^2=RT\) and \(R\perp T\),
\begin{equation}\label{eq:spherical_projection}
	\EE[Y-1\mid T]=\EE[R]\cdot T-1=nT-1.
\end{equation}
This is exactly the bounded surrogate optimizer \(\phi^\dagger(t)=nt-1\). Thus, the spherical surrogate optimum is not an independent design choice: it is the Gaussian-optimal preprocessor \(y-1\) projected onto the spherical model, inheriting the same first-order structure.

\subsection{A Finite-Sample Diagonal Bridge}\label{sec:finite_sample_bridge}

While the design results in the preceding subsection establish the optimal preprocessors in the population limit, practical diagonal screening must operate under finite measurements. In this subsection, we analyze the finite-sample behavior of the empirical diagonal scores, establishing a theoretical bridge between the statistical deflection quotients and the operational screening performance.

Recall from \eqref{eq:spherical_diagonal_score} that the empirical diagonal scores are given by
\begin{equation}
	s_j
	=
	\bfe_j^*(\bfD_\phi-\alpha_\phi \bfI_n)\bfe_j,
	\qquad
	j=1,\dots,n.
\end{equation}
Let \(\widehat S_k\) denote the indices of the \(k\) largest scores \(s_j\). To characterize the screening performance of \(\widehat S_k\), the following theorem establishes a uniform concentration bound for the diagonal scores, providing a general sample complexity condition to guarantee that the leaked support energy is controlled.

\begin{theorem}[Finite-sample diagonal screening bound]\label{thm:finite_sample_local_response}
	Let \(\phi:[0,1]\to\RR\) satisfy \(\|\phi\|_\infty<\infty\), and let \(Q_\phi\) and \(\widetilde{Q}_\phi\) be the exact and surrogate deflection quotients respectively. For any given energy tolerance \(\epsilon > 0\) and confidence parameter \(\eta>0\), if the sample size satisfies
	\begin{equation}\label{eq:sample_complexity_general}
		m
		\ge
		C_\eta \frac{k^2 \log n}{\epsilon^2} \big(\underbrace{\frac{1}{\sqrt{Q_\phi}}}_{\text{non-support}} + \underbrace{\frac{1}{\sqrt{\widetilde{Q}_\phi}}}_{\text{support}} \big)^2
		+
		\underbrace{C'_\eta \frac{k \|\phi\|_\infty \log n}{\epsilon \beta_\phi}}_{\text{worst-case tail control}},
	\end{equation}
	for sufficiently large constants \(C_\eta, C'_\eta > 0\) depending only on \(\eta\), then with probability at least \(1-3n^{-\eta}\), the leaked support energy of the screened set \(\widehat{S}_k\) satisfies
	\begin{equation}\label{eq:leaked_energy_general}
		\sum_{j\in S\setminus \widehat S_k}|x_j|^2 \le \epsilon.
	\end{equation}
\end{theorem}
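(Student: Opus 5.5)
The proof is a uniform concentration bound for the diagonal scores followed by a deterministic ranking argument. For each coordinate \(j\), write \(s_j-\beta_\phi|x_j|^2=\frac1m\sum_{i=1}^m \xi_{ij}\), where
\[
\xi_{ij}:=\phi(t_i)|u_{ij}|^2-\alpha_\phi-\beta_\phi|x_j|^2 .
\]
These are i.i.d. centered summands, bounded by \(2\|\phi\|_\infty\) because \(|u_{ij}|^2\le 1\) and \(|\alpha_\phi|,|\beta_\phi|\le\|\phi\|_\infty\). Bernstein's inequality then gives, with probability at least \(1-2n^{-\eta}\) after a union bound over \(n\) coordinates,
\[
|s_j-\beta_\phi|x_j|^2|\le C\sqrt{\frac{\sigma_j^2\eta\log n}{m}}+C\frac{\|\phi\|_\infty\eta\log n}{m}, \qquad \sigma_j^2:=\operatorname{Var}(\xi_{ij}).
\]
We bound \(\sigma_j^2\) separately for the two kinds of coordinates.

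\textbf{Non-support coordinates.} For \(j\notin S\), \eqref{eq:spherical_nonsupport_variance} gives exactly \(\sigma_j^2=\nu_\phi=\beta_\phi^2/Q_\phi\).

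\textbf{Support coordinates.} For \(j\in S\) we do not compute the variance exactly. We use the crude bound \(\sigma_j^2\le \EE[\phi(T)^2|u_j|^4]\le \EE[\phi(T)^2]=\gamma_\phi=\beta_\phi^2/\widetilde Q_\phi\), which holds since \(|u_j|^2\le1\).

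\textbf{Deviation levels.} Dividing by \(\beta_\phi\), all non-support scores satisfy \(s_\ell/\beta_\phi\le \delta_0\) and all support scores satisfy \(s_j/\beta_\phi\ge |x_j|^2-\delta_1\), where
\[
\delta_0:=C\sqrt{\tfrac{\log n}{mQ_\phi}}+r,\qquad \delta_1:=C\sqrt{\tfrac{\log n}{m\widetilde Q_\phi}}+r,\qquad r:=C\frac{\|\phi\|_\infty\log n}{m\beta_\phi}.
\]
The Bernstein failure probability is split among these events, accounting for the \(3n^{-\eta}\).

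\textbf{Ranking.} Consider a missed support coordinate \(j\in S\setminus\widehat S_k\). Since \(|\widehat S_k|=k\), some \(\ell\notin S\) must lie in \(\widehat S_k\), or otherwise \(\widehat S_k=S\) when \(|S|=k\). Hence \(s_j\le s_\ell\), and therefore \(|x_j|^2\le\delta_0+\delta_1\). Summing over at most \(k\) missed coordinates gives
\[
\sum_{j\in S\setminus\widehat S_k}|x_j|^2\le k(\delta_0+\delta_1).
\]
It remains to require \(k(\delta_0+\delta_1)\le\epsilon\). This splits into two conditions. The first is
\[
k\sqrt{\tfrac{\log n}{m}}\Big(\tfrac1{\sqrt{Q_\phi}}+\tfrac1{\sqrt{\widetilde Q_\phi}}\Big)\lesssim\epsilon,
\]
which is the first term of \eqref{eq:sample_complexity_general}. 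The second is
\[
\frac{k\|\phi\|_\infty\log n}{m\beta_\phi}\lesssim\epsilon,
\]
which is the tail term. If \(|S|<k\), fewer coordinates can be missed and the bound only improves.

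\textbf{Main obstacle.} The delicate step is the support-coordinate variance. The summand \(\phi(T)|u_j|^2\) couples \(T\) with \(u_j\) because \(x_j\ne0\), so no clean formula like \eqref{eq:spherical_nonsupport_variance} is available. The key point is to notice that boundedness \(|u_j|^2\le1\) lets us dominate it by \(\gamma_\phi\), which is exactly why the surrogate quotient appears in the statement. The remaining work is bookkeeping: checking that \(\alpha_\phi\) and \(\beta_\phi\) are bounded by \(\|\phi\|_\infty\), and verifying that the Bernstein constants depend only on \(\eta\).
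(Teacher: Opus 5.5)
Your proposal is correct and follows the paper's proof essentially step for step. The shared steps are:
\begin{itemize}
\item per-coordinate Bernstein with the exact null variance $\nu_\phi=\beta_\phi^2/Q_\phi$ off the support;
\item the envelope bound $\EE[\phi(T)^2]=\beta_\phi^2/\widetilde Q_\phi$ (from $|u_{ij}|^4\le 1$) on the support;
\item a union bound;
\item the contrapositive ranking argument, summed over at most $k$ missed coordinates.
\end{itemize}
The differences are cosmetic: you normalize by $\beta_\phi$, and you phrase the ranking via $s_j\le s_\ell$ for some selected $\ell\notin S$. One small fix: your stated reason for the $2\|\phi\|_\infty$ summand bound gives only $3\|\phi\|_\infty$; cite instead that the mean $\alpha_\phi+\beta_\phi|x_j|^2=\EE[\phi(T)|u_j|^2]$ itself has modulus at most $\|\phi\|_\infty$.
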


Bounding the leaked support energy to a constant level is crucial for initialization: the local convergence of downstream non-convex algorithms only requires the initialization relative error to be bounded by a sufficiently small constant, which is directly controlled by the leaked mass. In other words, ensuring that the leaked energy is below a constant threshold is a necessary and sufficient bottleneck to place the initial estimate within the basin of attraction. Indeed, recent sparse spectral initializers (e.g., \cite{Cai2023Provablesampleefficienta, Xu2024Exponentialspectralpursuit}) implicitly control this leaked energy to establish the final initialization error bounds, serving as the key stepping stone to connect coordinate-wise screening (Stage One) to dense eigenvector extraction (Stage Two).

However, the universal sample complexity bound \eqref{eq:sample_complexity_general} is inherently conservative because it must hold for any arbitrary preprocessor \(\phi\). In the general proof, bounding the coordinate variance of support coordinates requires a worst-case envelope approximation. Directly applying this universal bound to our proposed spherical surrogate \(\phi^\dagger(t) = nt-1\) would introduce an artificial \(n^2\) factor in the sample complexity, resulting in a loose and suboptimal scaling. This artifact arises because the universal bound fails to exploit the concentration properties of our specific optimal preprocessor. To bypass this suboptimality, we can establish a much tighter, localized bound for \(\phi^\dagger\) by exploiting the fact that the random direction component \(T_i\) concentrates closely around its typical scale \(1/n\) with high probability, as formalized in the following corollary.

\begin{corollary}[Localized support energy capture for the spherical surrogate]\label{cor:spherical_surrogate_localized_rate}
	Let \(\phi^\dagger(t)=nt-1\). Under the setup of \Cref{thm:finite_sample_local_response}, for any given energy tolerance \(\epsilon > 0\), if the sample size satisfies
	\begin{equation}\label{eq:sample_complexity_spherical_surrogate}
		m
		\ge
		\underbrace{C_\eta \frac{k^2 \log n}{\epsilon^2}}_{\text{reduced~}\eqref{eq:sample_complexity_general}}
		+
		\underbrace{C'_\eta \frac{k\,\log n\,\log^2(mn)}{\epsilon}}_{\text{localization}},
	\end{equation}
	for sufficiently large constants \(C_\eta, C'_\eta > 0\) depending only on \(\eta\), then with probability at least \(1-3n^{-\eta}\), the leaked support energy of the screened set \(\widehat{S}_k\) satisfies
	\begin{equation}
		\sum_{j\in S\setminus \widehat S_k}|x_j|^2 \le \epsilon.
	\end{equation}
\end{corollary}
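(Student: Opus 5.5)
The plan is to rerun the argument of \Cref{thm:finite_sample_local_response} for the specific preprocessor \(\phi^\dagger(t)=nt-1\). The generic envelope bounds \(|\phi|\le\|\phi\|_\infty=n-1\) and the worst-case support variance are replaced by bounds on a truncation event where every \(T_i\) is at its typical scale. The reduction has two pieces. First, \(\beta_{\phi^\dagger}\) is explicit: \(\beta_{\phi^\dagger}=\frac{n^2}{n-1}\sigma_n^2\), and \(\sigma_n^2=\frac{n-1}{n^2(n+1)}\) for \(\mathrm{Beta}(1,n-1)\), so \(\beta_{\phi^\dagger}=\frac{1}{n+1}\). Second, since every \(\bfu_i\bfu_i^*\) entry scales like \(1/n\), I normalize the scores by \(n\), i.e.\ work with \(n s_j\), whose mean gap is \(\frac{n}{n+1}|x_j|^2\asymp |x_j|^2\). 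I will then show that the normalized null variance \(n^2\nu_{\phi^\dagger}\) and the normalized support second moment \(n^2\gamma_{\phi^\dagger}/n^2\)-type quantity are both \(O(1)\). Concretely, I will show \(Q_{\phi^\dagger}\) and \(\widetilde Q_{\phi^\dagger}\) are bounded below by absolute constants after this normalization, using the Beta moments \(\EE T^2=\frac{2}{n(n+1)}\) and \(\EE T^4=O(n^{-4})\). Then the bracket \((Q^{-1/2}+\widetilde Q^{-1/2})^2\) in \eqref{eq:sample_complexity_general} becomes an absolute constant, which yields the reduced term \(C_\eta k^2\log n/\epsilon^2\).

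For the tail term, I will define the good event
\[
\mathcal E=\Big\{\max_{i\le m}\max_{j\le n} n|u_{ij}|^2\le C\log(mn),\ \max_{i\le m} nT_i\le C\log(mn)\Big\}.
\]
The event \(\mathcal E\) holds with probability at least \(1-n^{-\eta}\). The second condition follows because \(nT\) has a \(\mathrm{Beta}(1,n-1)\) tail \((1-s/n)^{n-1}\le e^{-s(n-1)/n}\), and the first follows by the same tail bound applied to each coordinate, since \(|u_{ij}|^2\) is also \(\mathrm{Beta}(1,n-1)\); a union bound over \(mn\) pairs finishes it. On \(\mathcal E\), each normalized summand \(n\,\phi^\dagger(T_i)|u_{ij}|^2\) is bounded by \(C\log^2(mn)\) instead of \(n\|\phi\|_\infty\cdot n\|\bfu\|^2\)-type envelopes. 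I will apply Bernstein's inequality to the truncated summands \(X_{ij}=n\phi^\dagger(T_i)|u_{ij}|^2\mathbf 1_{\mathcal E_i}\). Truncation shifts the mean by at most a superpolynomially small amount, because the tail mass is \(\le (mn)^{-C'}\) and the untruncated summands are bounded by \(n^2\); this is absorbed into the \(\epsilon\) budget. Bernstein gives, uniformly over \(j\) after a union bound,
\[
|n s_j - \tfrac{n}{n+1}|x_j|^2|\lesssim \sqrt{\frac{v_j\log n}{m}}+\frac{\log^2(mn)\log n}{m},
\]
with \(v_j\lesssim 1\) for \(j\notin S\) and \(v_j\lesssim 1+|x_j|^2\cdot O(1)\) for \(j\in S\). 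The coupling of \(T_i\) and \(u_{ij}\) for \(j\in S\) matters only through \(|u_{ij}|^2\le\) its cap and \(T_i\le\) its cap, so the variance stays \(O(1)\).

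I will finish with the deterministic leaked-energy step used in \Cref{thm:finite_sample_local_response}. Let \(\tau\) be the uniform deviation. Any missed support index \(j\in S\setminus\widehat S_k\) is outscored by some non-support index, so \(\frac{n}{n+1}|x_j|^2\le 2\tau\). Summing over at most \(k\) missed indices gives leaked energy \(\le Ck\tau\). Requiring \(k\sqrt{\log n/m}\lesssim\epsilon\) gives the first term of \eqref{eq:sample_complexity_spherical_surrogate}, and requiring \(k\log n\log^2(mn)/m\lesssim\epsilon\) gives the localization term. The failure probability budget is \(n^{-\eta}\) for \(\mathcal E\) and \(n^{-\eta}\) each for the Bernstein bounds on support and non-support coordinates, totalling \(1-3n^{-\eta}\).

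I expect the main obstacle to be the support-coordinate variance. There \(|u_{ij}|^2\) and \(T_i=|\bfu_i^*\bfx|^2\) are dependent, so the second moment \(\EE[(nT-1)^2 n^2|u_j|^4]\) must be computed without the envelope that created the \(n^2\) factor. My first attempt is to write \(\bfu_i^*\bfx=\bar u_{ij}x_j+\bfu_{i,-j}^*\bfx_{-j}\) and to use Gaussian representation \(\bfu=\bfa/\|\bfa\|\) with \(\|\bfa\|^2\approx n\) on \(\mathcal E\). This reduces the moment to mixed Gaussian moments of degree \(\le 8\), which are \(O(1)\) after the \(n\)-scaling.
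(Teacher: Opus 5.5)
Your plan follows the paper's proof essentially step by step: truncate both $t_i$ and $|u_{ij}|^2$ at the typical scale $\log(mn)/n$ using the $\mathrm{Beta}(1,n-1)$ tail, apply Bernstein to the truncated summands (envelope $\log^2(mn)/n$, variance $O(n^{-2})$), show the truncation bias is negligible, take a union bound, and finish with the deterministic leaked-energy step using $\beta_{\phi^\dagger}=1/(n+1)$. The remaining differences are cosmetic: you rescale the scores by $n$, you truncate per sample rather than per pair $(i,j)$, and you bound the bias by $n^2$ times the tail mass instead of by Cauchy--Schwarz.

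Two statements in your write-up are wrong and should be dropped.

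\emph{First}, $\widetilde Q_{\phi^\dagger}$ is not bounded below by a constant. Since $\gamma_{\phi^\dagger}=\frac{n-1}{n+1}$, one gets $\widetilde Q_{\phi^\dagger}=\frac{1}{n^2-1}$; only $Q_{\phi^\dagger}\approx 1/2$ is of order one. Both quotients are invariant under $\phi\mapsto c\phi$, so rescaling the scores by $n$ changes nothing. The bracket in \eqref{eq:sample_complexity_general} is therefore of order $n^2$ for $\phi^\dagger$, which is exactly the artifact the corollary is meant to avoid. The leading term must come from your direct Bernstein step with the true support variance $\EE[(nT-1)^2|u_j|^4]$. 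It cannot come from the theorem's bracket, whose support part rests on the lossy bound $|u_{ij}|^4\le 1$.

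\emph{Second}, the caps on $\mathcal E$ alone only give a normalized variance of order $\log^4(mn)$, which would put extra logarithms into the leading $k^2\log n/\epsilon^2$ term. The $O(1)$ bound requires a genuine moment estimate. Your Gaussian-representation computation would work, and the independence of $\|\bfa\|_2^2$ and $\bfu$ gives exact moment identities, so no event $\|\bfa\|_2^2\approx n$ is needed. The paper, however, disposes of your ``main obstacle'' in one line by Cauchy--Schwarz:
\[
\EE[(nT-1)^2|u_j|^4]\le\sqrt{\EE[(nT-1)^4]\,\EE|u_j|^8}\le C/n^2 .
\]
This holds for every $j$ regardless of the dependence between $T$ and $u_j$, because $nT$ has bounded fourth moment and $\EE|u_j|^8=\frac{24}{n(n+1)(n+2)(n+3)}$.
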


The proof of \Cref{cor:spherical_surrogate_localized_rate} (given in \Cref{sec:proof_cor_spherical_surrogate_localized_rate}) formalizes this intuition by restricting the analysis to a high-probability typical event where the spherical summands are localized around their typical \(1/n\) scale. This refined analysis successfully bypasses the worst-case envelope approximations required in the general theorem. Consequently, the leading term in \eqref{eq:sample_complexity_spherical_surrogate} restores the optimal sample complexity scaling of \(k^2 \log n\), aligning with standard sparse recovery literature, while the second term represents a localized tail-control cost.

\subsection{Unknown Signal Energy and a Plug-In Estimator}\label{sec:unknown_norm}

While the preceding subsections assume a known signal energy \(r^2 = \|\bfx\|_2^2\), practical implementation requires estimating \(r^2\) from the measurements. This subsection establishes the robustness of our screening framework under scale estimation, showing that the resulting plug-in error is negligible and does not alter the coordinate ranking. Specifically, we estimate the signal energy \(r^2\) via the empirical mean:

\begin{equation}
	\widehat r^2 := \frac{1}{m}\sum_{i=1}^m y_i, \qquad
	\Delta := \widehat r^2 - r^2.
\end{equation}
For a general signal energy scale, the bounded surrogate optimizer \(\phi^\dagger\) defined in \eqref{eq:canonical_surrogate_phi} yields the oracle spherical preprocessor \(n y_i/\|\bfa_i\|_2^2 - r^2\). Replacing the oracle energy \(r^2\) with its empirical estimate \(\widehat r^2\) yields the practical plug-in matrix. Define
\begin{align}
	\bfD_{\mathrm{sph}}
	 & :=
	\frac{1}{m}\sum_{i=1}^m
	\left(
	\frac{n y_i}{\|\bfa_i\|_2^2}-r^2
	\right)\bfu_i\bfu_i^*,
	\\
	\widehat{\bfD}_{\mathrm{sph}}
	 & :=
	\frac{1}{m}\sum_{i=1}^m
	\left(
	\frac{n y_i}{\|\bfa_i\|_2^2}-\widehat r^2
	\right)\bfu_i\bfu_i^*.
\end{align}
Let \(\alpha_\phi\) and \(\widehat\alpha_\phi\) be the isotropic centering coefficients defined by \eqref{eq:spherical_alpha} under the preprocessors \(\phi(t) = nt - r^2\) and \(\widehat\phi(t) = nt - \widehat r^2\), respectively. In particular, the expectation in \eqref{eq:spherical_alpha} is evaluated solely over the population directional statistic \(T\), treating the empirical scale estimate \(\widehat r^2\) (and thus \(\Delta\)) as a fixed parameter determined by the realized measurements. Since \(\EE[1-T]=(n-1)/n\), this yields
\begin{equation}\label{eq:plugin_alpha_definition}
	\alpha_\phi = r^2 \alpha_{\phi^\dagger},
	\qquad
	\widehat\alpha_\phi = \alpha_\phi - \frac{\Delta}{n},
\end{equation}
where \(\alpha_{\phi^\dagger} = \frac{1}{n-1}\EE[(1-T)(nT-1)]\) is the coefficient for the normalized preprocessor \(\phi^\dagger(t) = nt-1\). We then define the corresponding diagonal scores by
\begin{equation}
	s_j
	:=
	\bfe_j^*(\bfD_{\mathrm{sph}}-\alpha_\phi\bfI_n)\bfe_j,
	\qquad
	\widehat s_j
	:=
	\bfe_j^*(\widehat{\bfD}_{\mathrm{sph}}-\widehat\alpha_\phi\bfI_n)\bfe_j.
\end{equation}
The following theorem establishes this plug-in stability, showing that the estimation error between the oracle and plug-in scores is uniformly controlled.

\begin{theorem}[Plug-in stability for diagonal screening]\label{thm:plugin_stability}
	If \(m\ge \log n\), then for every \(\eta>0\) there exists a constant \(C_\eta>0\) such that with probability at least \(1-4n^{-\eta}\),
	\begin{equation}\label{eq:plugin_uniform_bound}
		\max_{1\le j\le n}
		\left|
		\widehat{s}_j-s_j
		\right|
		\le
		C_\eta r^2\frac{\log n}{mn}.
	\end{equation}
\end{theorem}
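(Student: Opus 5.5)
The plan is to compute the difference \(\widehat s_j-s_j\) exactly. It turns out to factor into the scalar scale error \(\Delta\) times a purely directional fluctuation, and each factor is then bounded by a separate Bernstein-type inequality.

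\textbf{Step 1 (exact algebra).} The two matrices differ only in the offset of the preprocessor, so
\[
\widehat{\bfD}_{\mathrm{sph}}-\bfD_{\mathrm{sph}}=-\frac{\Delta}{m}\sum_{i=1}^m\bfu_i\bfu_i^*.
\]
By \eqref{eq:plugin_alpha_definition}, \(\widehat\alpha_\phi-\alpha_\phi=-\Delta/n\). Hence, writing \(W_j:=\frac1m\sum_{i=1}^m|u_{ij}|^2\) with \(u_{ij}:=\bfe_j^*\bfu_i\),
\[
\widehat s_j-s_j=-\Delta\Bigl(W_j-\frac1n\Bigr),\qquad j=1,\dots,n.
\]
The centering \(\Delta/n\) is exactly the mean of \(\Delta W_j\), since \(\EE|u_{ij}|^2=1/n\). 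This is why the bound is a product of two small quantities and not merely \(O(|\Delta|/n)\).

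\textbf{Step 2 (scale error).} Since \(y_i=|\bfa_i^*\bfx|^2\) and \(\|\bfx\|_2^2=r^2\), the \(y_i\) are i.i.d.\ \(r^2\,\mathrm{Exp}(1)\). Bernstein's inequality for sub-exponential variables then gives, with probability at least \(1-2n^{-\eta}\),
\[
|\Delta|\le C_\eta r^2\Bigl(\sqrt{\tfrac{\log n}{m}}+\tfrac{\log n}{m}\Bigr).
\]

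\textbf{Step 3 (directional fluctuation).} Each \(|u_{ij}|^2\sim\mathrm{Beta}(1,n-1)\), the same law as \(T\). Its tail satisfies \(\PP(|u_{ij}|^2>t)=(1-t)^{n-1}\le \rme^{-(n-1)t}\), so \(|u_{ij}|^2-1/n\) is centered and sub-exponential with \(\psi_1\)-norm \(\lesssim 1/n\). The rows \(i=1,\dots,m\) are independent. Bernstein's inequality plus a union bound over \(j\le n\), with the \(\eta\) in the exponent inflated by one to absorb the factor \(n\), yields with probability at least \(1-2n^{-\eta}\)
\[
\max_{j}\Bigl|W_j-\frac1n\Bigr|\le \frac{C_\eta}{n}\Bigl(\sqrt{\tfrac{\log n}{m}}+\tfrac{\log n}{m}\Bigr).
\]
I expect this to be the main technical point. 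The summands are bounded only by \(1\), so a naive Hoeffding bound would lose the crucial \(1/n\) factor. The remedy is to use the sub-exponential scale \(1/n\), either from the tail bound above or from the representation \(|u_{ij}|^2=|a_{ij}|^2/\|\bfa_i\|_2^2\) together with concentration of \(\|\bfa_i\|_2^2\) around \(n\).

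\textbf{Step 4 (combine).} On the intersection of the two events, which has probability at least \(1-4n^{-\eta}\), Step 1 gives
\[
\max_j|\widehat s_j-s_j|\le C_\eta^2\,\frac{r^2}{n}\Bigl(\sqrt{\tfrac{\log n}{m}}+\tfrac{\log n}{m}\Bigr)^2.
\]
The assumption \(m\ge\log n\) gives \(\log n/m\le\sqrt{\log n/m}\le 1\). The squared bracket is therefore at most \(4\log n/m\), which yields \eqref{eq:plugin_uniform_bound} after renaming the constant.
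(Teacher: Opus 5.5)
Your proposal is correct and follows the paper's proof essentially step for step. Both use the exact identity $\widehat s_j-s_j=-\Delta\,\bfe_j^*(\widehat{\bfSigma}-\tfrac1n\bfI_n)\bfe_j$, a sub-exponential Bernstein bound for $\Delta$, a Bernstein-plus-union bound at sub-exponential scale $1/n$ for the diagonal directional fluctuations, and the product bound simplified via $m\ge\log n$. Your explicit Beta-tail justification of the $1/n$ scale is slightly more self-contained than the paper's citation.
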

The proof is provided in \Cref{sec:proof_thm_plugin_stability}. \Cref{thm:plugin_stability} shows the plug-in error is a negligible perturbation relative to the sorting scores' statistical fluctuations. Specifically, the uniform fluctuation of the oracle scores scales as \(\sqrt{\log n} / (n\sqrt{m})\) (see \eqref{eq:sj-derivation} in the proof of \Cref{cor:spherical_surrogate_localized_rate}), which is of larger order than the perturbation bound \eqref{eq:plugin_uniform_bound}. Consequently, empirical scale substitution does not substantively alter the sample complexity.

While analyzed for the spherical design, this plug-in stability applies generally: global scale substitutions introduce only lower-order perturbations relative to the oracle rule. Since the oracle design is optimal, no empirical plug-in can substantially improve the leading-order coordinate separation beyond it.

\subsection{From Stage-One Design to a Two-Stage Initializer}\label{sec:two-stage-initializer}

The optimal preprocessors and stability guarantees established in the preceding subsections culminate in a practical two-stage initializer, summarized in \Cref{alg:practical_two_stage}. Stage One utilizes the plug-in spherical screening rule to identify the support estimate \(\widehat{S}\). Stage Two then performs restricted dense spectral extraction on \(\widehat{S}\) using the optimal design from~\cite{Luo2019Optimalspectralinitialization}. In practice, the stage-one matrix \(\bfD^{(1)}\) is constructed without the isotropic centering term \(\widehat{\alpha}_\phi\bfI_n\) because subtracting a scalar multiple of the identity shifts all diagonal scores by the same constant, which does not alter the selected top-\(k\) coordinates.

\begin{algorithm}[t]
	\caption{Two-Stage Spectral Initialization with Plug-In Spherical Screening}
	\label{alg:practical_two_stage}
	\begin{algorithmic}[1]
		\REQUIRE Measurements \(\{(\bfa_i,y_i)\}_{i=1}^m\), sparsity level \(k\), threshold parameter \(\veps > 0\).
		\ENSURE Initializer \(\bfx^0\).
		\STATE Estimate the signal energy by \(\widehat{r}^2=\frac{1}{m}\sum_{i=1}^m y_i\).
		\ALGSTAGE{Stage One: Plug-In Spherical Screening}
		\STATE Form the plug-in stage-one matrix:
		\begin{equation*}
			\bfD^{(1)}
			=
			\frac{1}{m}\sum_{i=1}^m
			\left(\frac{n y_i}{\|\bfa_i\|_2^2}-\widehat{r}^2\right)
			\frac{\bfa_i\bfa_i^*}{\|\bfa_i\|_2^2}
		\end{equation*}
		\STATE Compute the diagonal scores \(s_j=\bfe_j^* \bfD^{(1)} \bfe_j\) and let \(\widehat S\) be the \(k\) largest coordinates.
		\ALGSTAGE{Stage Two: Restricted Dense Estimation}
		\STATE Normalize and threshold the measurements: let \(z_i=y_i/\widehat r^2\) and \(\widetilde{z}_i = \max\{z_i, \veps\}\). Form the restricted dense stage matrix:
		\begin{equation*}
			\bfD^{(2)}
			=
			\frac{1}{m}\sum_{i=1}^m
			\left(1-\frac{1}{\widetilde{z}_i}\right)\bfa_i\bfa_i^*
		\end{equation*}
		\STATE Let \(\bfu\) be the principal eigenvector of the principal submatrix \([\bfD^{(2)}]_{\widehat S,\widehat S}\), following the dense spectral construction of~\cite{Luo2019Optimalspectralinitialization}.
		\STATE Set \(\bfx^0\in\CC^n\) by \((\bfx^0)_{\widehat{S}}=\widehat r\bfu\) and \((\bfx^0)_{\widehat{S}^c}=0\).
	\end{algorithmic}
\end{algorithm}

\section{Proofs of Design Results}\label{sec:proofs}

This section proves the two design optimality results. The Gaussian argument reduces the deflection quotient to a two-mode Laguerre decomposition, while the spherical argument uses the corresponding Hilbert-space quotient and Jacobi representation. The finite-sample and plug-in stability proofs are deferred to \Cref{appendices:B}.

\subsection{Proof of \Cref{thm:optimal_preprocessor}}\label{sec:proof_thm_optimal_preprocessor}
Decompose
\begin{equation}
	f = aL_0 + bL_1 + g,
\end{equation}
where \(L_0\) and \(L_1\) are the Laguerre polynomials defined in \eqref{eq:Laguerre_modes}, \(a,b\in\mathbb{R}\), and \(g\perp L_0,L_1\) in \(L^2([0,\infty),e^{-y}\rmd y)\). Since \(L_1(y)=1-y\), this gives
\begin{equation}
	\alpha_f = a,\qquad
	\beta_f = -b,\qquad
	\EE[f(Y)^2] = a^2 + b^2 + \langle g,g\rangle_{\mathrm G},
\end{equation}
and hence
\begin{equation}
	\nu_f
	=
	2\EE[f(Y)^2]-\alpha_f^2
	=
	a^2+2b^2+2\langle g,g\rangle_{\mathrm G}.
\end{equation}
Therefore,
\begin{equation}
	Q_f
	=
	\frac{b^2}{a^2+2b^2+2\langle g,g\rangle_{\mathrm G}},
\end{equation}
which is maximized precisely when \(b\neq0\), \(a=0\), and \(g=0\). Thus, the maximizers are exactly the nonzero scalar multiples of \(L_1\), equivalently \(f^\star(y)=c(y-1)\) with \(c=-b\neq0\). Under the orientation convention \(\beta_f\ge0\), this gives \(c>0\), which proves \eqref{eq:optimal_preprocessor}.~\hfill\IEEEQEDopen

\subsection{Proofs of \Cref{thm:spherical_jacobi} and \Cref{prop:spherical_surrogate_optimum}}\label{sec:proof_thm_spherical_jacobi}
We first prove \Cref{prop:spherical_surrogate_optimum}, which follows the same orthogonal-mode argument used in the Gaussian case. Using the normalized shifted-Jacobi modes \(J_0,J_1\) from \eqref{eq:Jacobi_modes}, decompose
\begin{equation}
	\phi = aJ_0 + bJ_1 + g,
\end{equation}
where \(a,b\in\mathbb{R}\) and \(g\perp J_0,J_1\) in \(L^2([0,1],p_n(t)\rmd t)\). Using \eqref{eq:beta_phi_Jacobi}, we obtain
\begin{equation}
	\beta_\phi = \frac{n\sigma_n}{n-1}b,
	\qquad
	\gamma_\phi = a^2 + b^2 + \langle g,g\rangle_{\mathrm{sph}}.
\end{equation}
Therefore,
\begin{equation}
	\widetilde Q_\phi
	=
	\left(\frac{n\sigma_n}{n-1}\right)^2
	\frac{b^2}{a^2+b^2+\langle g,g\rangle_{\mathrm{sph}}},
\end{equation}
which is maximized precisely when \(b\neq0\), \(a=0\), and \(g=0\). Thus, the maximizers are exactly the nonzero scalar multiples of \(J_1\), equivalently \(\phi^\dagger(t)=c(nt-1)\) with \(c\neq0\). Under the orientation convention \(\beta_\phi\ge0\), this gives \(c>0\), which proves \Cref{prop:spherical_surrogate_optimum}.

We next prove \Cref{thm:spherical_jacobi}. The exact quotient is more involved because, after the change of variables \(g(t)=(1-t)\phi(t)\), its denominator becomes a rank-one perturbation of the \(L^2\) energy rather than the energy itself. We therefore rewrite the optimization as a Hilbert-space quotient. Since \(T\sim \mathrm{Beta}(1,n-1)\) and \(n\ge 6\), the function
\begin{equation}
	h(t):=\frac{t-\frac{1}{n}}{1-t}
\end{equation}
belongs to \(L^2([0,1],p_n(t)\rmd t)\).
Let
\begin{equation}
	g(t):=(1-t)\phi(t).
\end{equation}
Also,
\begin{align}
	\beta_\phi
	 & =
	\frac{n}{n-1}
	\EE\left[\frac{T-\frac{1}{n}}{1-T}g(T)\right],
	\\
	\nu_\phi
	 & =
	\frac{2}{n(n-1)}\EE[g(T)^2]
	-
	\frac{1}{(n-1)^2}\EE[g(T)]^2.
\end{align}
Multiplying the denominator by \(n(n-1)/2\) shows that maximizing \(Q_\phi\) is equivalent to maximizing
\begin{equation}
	\frac{\left\langle h,g\right\rangle_{\mathrm{sph}}^2}
	{\left\langle g,Kg\right\rangle_{\mathrm{sph}}},
\end{equation}
in the Hilbert space \(L^2([0,1],p_n(t)\rmd t)\), where
\begin{equation}\label{eq:jacobi_K_operator}
	K
	:=
	I-\frac{n}{2(n-1)}P_1,
	\qquad
	P_1 g:=\langle g,1\rangle_{\mathrm{sph}} 1.
\end{equation}
Since \(\langle 1,1\rangle_{\mathrm{sph}}=1\) and \(n\ge 6\), the operator \(K\) is positive definite and
\begin{equation}
	K^{-1}
	=
	I+\frac{n}{n-2}P_1.
\end{equation}
The generalized Rayleigh quotient is therefore uniquely maximized by \(g\propto K^{-1}h\). Since \(P_1 h = \langle h,1\rangle_{\mathrm{sph}} 1\), it remains to compute \(\langle h,1\rangle_{\mathrm{sph}}\). Under \(T\sim \mathrm{Beta}(1,n-1)\),
\begin{align}
	\nonumber \left\langle h,1\right\rangle_{\mathrm{sph}}
	 & =
	\EE\left[\frac{T-\frac{1}{n}}{1-T}\right] \\
	 & =
	\EE\left[\frac{T}{1-T}\right]
	-
	\frac{1}{n}\EE\left[\frac{1}{1-T}\right]
	=
	\frac{1}{n(n-2)}.
\end{align}
Hence,
\begin{equation}
	g^\star(t)
	\propto
	h(t)+\frac{n}{n-2}\left\langle h,1\right\rangle_{\mathrm{sph}}
	=
	\frac{t-\frac{1}{n}}{1-t}+\frac{1}{(n-2)^2}.
\end{equation}
Dividing by \(1-t\) gives
\begin{equation}
	\phi^\star(t)
	\propto
	\frac{t-\frac{1}{n}}{(1-t)^2}
	+
	\frac{1}{(n-2)^2(1-t)},
\end{equation}
so the exact maximizers are precisely the nonzero scalar multiples of the profile in \eqref{eq:spherical_exact_deflection_profile}. Under the orientation convention \(\beta_\phi\ge0\), this becomes \eqref{eq:spherical_exact_deflection_profile} with \(c>0\). For \(n\ge 6\), this profile belongs to \(L^2([0,1],p_n(t)\rmd t)\), so the maximizer obtained in the \(g\)-coordinates corresponds to an admissible spherical preprocessor.~\hfill\IEEEQEDopen

\section{Discussion}\label{sec:discussion}

\subsection{Data Dependence in Multi-Stage Initialization}
In the practical implementation of the two-stage initializer (\Cref{alg:practical_two_stage}), reusing the same measurements across both stages introduces statistical dependence between the support screening and restricted dense estimation steps. Consequently, the standard dense optimality results derived under independent measurement assumptions do not directly apply to the coupled procedure, and the dense weight \(1-z_i^{-1}\) is merely a principled dense-stage design rather than a jointly optimal rule. A full joint characterization would require analyzing the conditional distribution induced by the stage-one support selection, which falls outside the scope of the present work. Since our primary objective is the optimal spectral design for stage-one screening, we leave such multi-stage joint optimization for future research.

\subsection{Theoretical Role and Validity of the Surrogate Quotient}
Since \Cref{cor:spherical_surrogate_localized_rate} bypasses the unfavorable scaling of the surrogate quotient \(\widetilde{Q}_\phi\) in \Cref{thm:finite_sample_local_response} through localization, one may ask whether the dependence on both design quotients in \Cref{thm:finite_sample_local_response} can be removed in the same way. This, however, is impossible for the exact quotient \(Q_\phi\). The quantity \(Q_\phi\) is determined by the non-support fluctuations under the null hypothesis, which contain no signal information and therefore cannot benefit from localization. Consequently, \(Q_\phi\) remains the design-dependent factor governing the non-support variance \eqref{eq:nonsupport-j} and the corresponding threshold \eqref{eq:finite_sample_nonsupport} in the concentration bound, making the maximization of the exact deflection quotient necessary for minimizing the overall sample complexity.

Nonetheless, although maximizing \(Q_\phi\) is theoretically necessary, this work optimizes the surrogate quotient \(\widetilde{Q}_\phi\) instead. The justification is that the two optimization problems are asymptotically equivalent, while the surrogate optimizer \(\phi^\dagger\) remains analytically tractable. Indeed, as \(n\to\infty\), the Beta variable \(T \sim \mathrm{Beta}(1,n-1)\) concentrates near \(0\). Consequently, the centering term \(\alpha_\phi^2\) becomes asymptotically negligible, aligning the exact null variance \(\nu_\phi\) with the scaled \(L^2\)-energy \(\gamma_\phi\) via the relation
\begin{equation}
    \nu_\phi \approx \frac{2}{n^2}\gamma_\phi,
    \qquad
    Q_\phi \approx \frac{n^2}{2}\widetilde Q_\phi,
\end{equation}
so that maximizing \(\widetilde Q_\phi\) is asymptotically equivalent to maximizing \(Q_\phi\). However, the exact optimizer \(\phi^\star\) in \eqref{eq:spherical_exact_deflection_profile} exhibits a boundary singularity as \(t\to1\), which complicates finite-sample concentration analysis. In contrast, the surrogate optimizer \(\phi^\dagger\) remains smooth, parameter-free, and globally bounded, leading to the localized guarantee in \Cref{cor:spherical_surrogate_localized_rate}.

\section{Simulations}\label{sec:sim}

\begin{figure*}[t]
	\centering
	\subfloat[Oracle and mismatched stage-one designs.\label{fig:exp1a}]{%
		\includegraphics[width=0.48\textwidth]{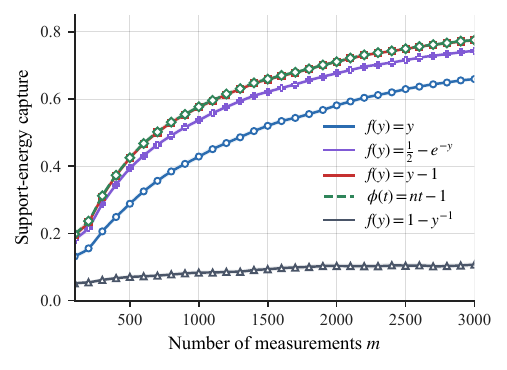}
	}\hfill
	\subfloat[Paired plug-in minus oracle gap.\label{fig:exp1b}]{%
		\includegraphics[width=0.48\textwidth]{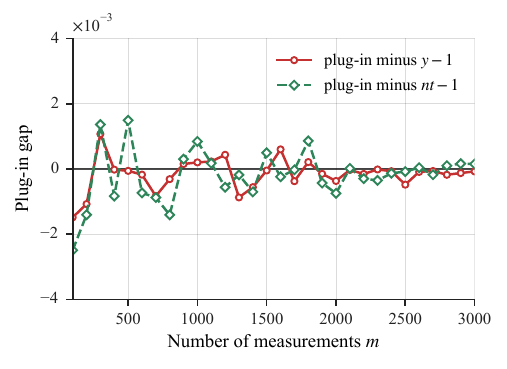}
	}
	\caption{Stage-One support energy capture for \(n=1000\) and \(k=20\). \Cref{fig:exp1a} compares the empirical screening quality of different spectral designs; \Cref{fig:exp1b} shows the paired plug-in loss relative to the oracle version.}
	\label{fig:exp1}
\end{figure*}

This section assesses the practical consequences of the deflection-based design at the level targeted by this paper: spectral initialization. The experiments focus on the initialization module itself, isolating the effect of the stage-one spectral operator without adding a downstream iterative refinement algorithm. \Cref{sec:exp1} compares stage-one support energy capture and plug-in stability across several screening rules. \Cref{sec:exp2} then tests whether the stage-one gains persist in the full two-stage initializer under a common restricted dense extraction step. Finally, \Cref{sec:exp3} compares the bounded spherical surrogate with the exact spherical profile at the finite-sample scale used for screening.

\subsection{Common Setup}
In every trial, a uniformly random support \(S\subset\{1,\dots,n\}\) of size \(k\) is generated, the entries \(\{x_j\}_{j\in S}\) are drawn \iid from \(\mathcal{CN}(0,1)\), \(x_j=0\) is set for \(j\notin S\), and the vector is normalized so that \(\|\bfx\|_2=1\). Measurements are generated as \(y_i=|\bfa_i^*\bfx|^2\) with \(\bfa_i\sim\mathcal{CN}(0,\bfI_n)\). For spherical methods, the normalized statistic \(t_i=y_i/\|\bfa_i\|_2^2\) is used. The dimensions and sample sizes are fixed as \(n=1000,k=20,m\in\{100,200,\dots,3000\},\)
and averages are computed over \(1000\) independent Monte Carlo trials at each sample size \(m\). To ensure a controlled comparison, all methods are evaluated on the same realization of signal and sensing vectors within each trial. The designs compared are summarized in \Cref{tab:stage_one_designs}. In all cases, coordinates are ranked by their respective stage-one diagonal scores, and \(\widehat S_k\) denotes the resulting top-\(k\) support estimate. For \(\bfD_{\mathrm{dense}}\), the preprocessor \(f(y) = 1 - y^{-1}\) is implicitly implemented as \(1 - 1/\max\{y, \veps\}\) with \(\veps = 0.01\) in simulations to prevent the singularity near zero and ensure numerical stability.

\begin{table}[htbp]
	\centering
	\caption{Summary of Stage-One Diagonal Screening Designs}
	\label{tab:stage_one_designs}
	\begin{tabular}{l|c|l}
		\hline
		Operator                                                                            & Model            & Preprocessor Function                                            \\
		\hline
		$\bfD_{y}$~\cite{Wang2018Sparsephaseretrieval, Wu2021HadamardWirtingerflow}         & Gaussian         & $f(y) = y$                                                 \\
		$\bfD_{\exp}$~\cite{Gao2017Phaselessrecoveryusing,Xu2024Exponentialspectralpursuit} & Gaussian         & $f(y) = \frac{1}{2} - e^{-y}$                              \\
		$\bfD_{\mathrm{cen}}$                                                               & Gaussian         & $f(y) = y - 1$                                             \\
		$\bfD_{\mathrm{dense}}$                                                             & Gaussian         & $f(y) = 1 - y^{-1}$                                        \\
		\hline
		$\bfD_{\mathrm{sph}}$                                                               & Spherical        & $\phi(t) = nt - 1$                                         \\
		$\bfD_{\mathrm{sph}}^{\mathrm{exact}}$                                              & Spherical        & $\phi(t) = \frac{t-1/n}{(1-t)^2} + \frac{1}{(n-2)^2(1-t)}$ \\
		\hline
		$\widehat{\bfD}_{\mathrm{cen}}$                                                     & Gaussian (Est.)  & $f(y) = y - \widehat{r}^2$                                 \\
		$\widehat{\bfD}_{\mathrm{sph}}$                                                     & Spherical (Est.) & $\phi(t) = nt - \widehat{r}^2$                             \\
		\hline
	\end{tabular}
\end{table}

\subsection{Stage-One Energy Capture}\label{sec:exp1}

The primary metric for stage-one evaluation is the captured support energy
\begin{equation}
	\mathcal E_{\mathrm{supp}}(\widehat S_k)
	:=
	\frac{\sum_{j\in \widehat S_k\cap S}|x_j|^2}
	{\sum_{j\in S}|x_j|^2}.
\end{equation}
This metric aligns with \Cref{thm:finite_sample_local_response}, as it measures the fraction of signal mass retained within the selected support estimate \(\widehat S_k\).

This experiment assesses whether the design criterion leads to a visible improvement in the screening step. As shown in \Cref{fig:exp1a}, the proposed designs \(\bfD_{\mathrm{cen}}\) and \(\bfD_{\mathrm{sph}}\) yield higher support energy capture than the classical \(\bfD_y\)-based baseline. This separation indicates that the spectral design affects not only the theoretical quotient but also the finite-sample quality of the selected support. Moreover, the performance of the two proposed designs is nearly indistinguishable, which is consistent with the projection identity in \Cref{sec:spherical_design}: the spherical surrogate preserves the structure of the Gaussian design while removing the radial factor.

The deflection criterion also predicts the empirical ordering of standard screening rules. The Gaussian exact quotient yields a closed-form ordering for the canonical Gaussian preprocessors \(f(y)=y\), \(f(y)=\frac{1}{2}-e^{-y}\), and \(f(y)=y-1\). Defining \(Q_f=\beta_f^2/(2\EE[f(Y)^2]-\alpha_f^2)\), it is straightforward to verify that
\begin{equation}
	Q_y=\frac{1}{3} < Q_{\exp}=\frac{3}{8} < Q_{\mathrm{cen}}=\frac{1}{2}.
\end{equation}
This predicts the performance ordering \(\bfD_{\mathrm{cen}} \succ \bfD_{\exp} \succ \bfD_y\), which is consistent with the empirical screening curves in \Cref{fig:exp1a}.

Although the dense spectral preprocessor \(f(y) = 1 - y^{-1}\) is optimal for the stage-two dense eigenvector extraction~\cite{Luo2019Optimalspectralinitialization}, its deflection quotient \(Q_f\) is undefined because the second moment of \(1/Y\) diverges for \(Y \sim \mathrm{Exp}(1)\). In the diagonal screening stage, this infinite variance translates into large coordinate-wise fluctuations induced by small measurements. Consequently, even with the implicit thresholding, \(\bfD_{\mathrm{dense}}\) performs noticeably worse than the deflection-optimal designs \(\bfD_{\mathrm{cen}}\) and \(\bfD_{\mathrm{sph}}\) in \Cref{fig:exp1a}. This contrast highlights that the statistical characteristics of coordinate-wise support screening are fundamentally different from those of dense eigenvector estimation, motivating a dedicated stage-one spectral design.

Finally, \Cref{fig:exp1b} assesses the effect of plug-in estimation. The observed gaps between oracle and plug-in versions are small relative to the performance differences between screening rules, which is consistent with the stability result in \Cref{thm:plugin_stability}.

\subsection{End-to-End Initializer}\label{sec:exp2}
This experiment evaluates the practical significance of the proposed design gain for the complete initializer studied in this paper. After support screening, a restricted dense spectral second stage is applied with the dense spectral preprocessor \(1-z_i^{-1}\), where \(z_i=y_i/\widehat r^2\), following~\cite{Luo2019Optimalspectralinitialization}. Since all methods employ a common second stage, performance variations in \Cref{fig:exp2} isolate the effect of the stage-one support estimate \(\widehat S\). The initialization quality is measured by the phase-invariant relative error of the estimated signal \(\bfx^0\) with respect to the true signal \(\bfx\):
\begin{equation}\label{eq:rel_err_metric}
	\mathcal{L}(\bfx^0, \bfx) := \min_{\theta \in [0, 2\pi)} \frac{\|\bfx^0 - e^{i\theta} \bfx\|_2}{\|\bfx\|_2}.
\end{equation}

\begin{figure}[t]
	\centering
	\includegraphics[width=\columnwidth]{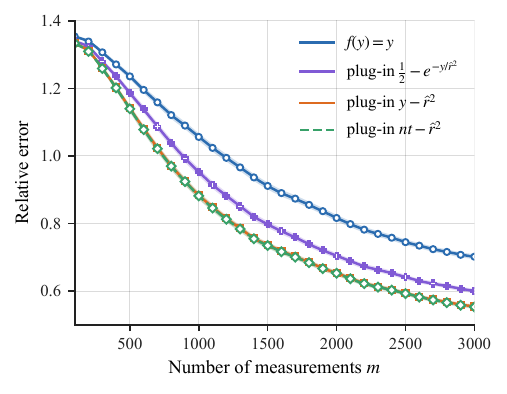}
	\caption{End-to-end relative initialization error under a common dense second stage.}
	\label{fig:exp2}
\end{figure}

The results in \Cref{fig:exp2} show that the screening advantage has a measurable effect on the full two-stage initializer. The centered plug-in variants reduce the relative initialization error compared with the \(\bfD_y\)-based baseline. Specifically, at \(m=3000\), the mean relative error decreases from approximately \(0.70\) for the baseline to \(0.55\) for the two centered variants (Gaussian and spherical), which perform nearly identically. This experiment addresses practical significance at the initialization level: under a fixed dense second stage, better stage-one screening produces a better starting point for downstream refinement methods.

\subsection{Exact Spherical Profile Versus Bounded Surrogate}\label{sec:exp3}
Finally, the approximation underlying \Cref{thm:spherical_jacobi} is tested by comparing the bounded surrogate \eqref{eq:canonical_surrogate_phi} against a truncated implementation of the exact spherical profile \eqref{eq:spherical_exact_deflection_profile}, where \(1-t\) is replaced with \(\max\{1-t,\veps\}\) with \(\veps=0.01\) to avoid the boundary singularity. Six representative sample sizes spanning the full grid are reported.

\begin{table}[h]
	\centering
	\caption{Mean support energy capture comparison (\(\Delta=\mathcal E_{\mathrm{sph}}-\mathcal E_{\mathrm{exact}}\)).}
	\label{tab:exp3}
	\begin{tabular}{c|ccc}
    \hline
    $m$  & surrogate \eqref{eq:canonical_surrogate_phi} & exact profile \eqref{eq:spherical_exact_deflection_profile} & $\Delta\,(\times 10^{-4})$ \\
    \hline
    100  & 0.1543                                     & 0.1536                                              & 6.8                        \\
    300  & 0.3147                                     & 0.3142                                              & 4.8                        \\
    500  & 0.4247                                     & 0.4243                                              & 4.1                        \\
    1000 & 0.5778                                     & 0.5780                                              & -2.2                       \\
    2000 & 0.7109                                     & 0.7105                                              & 4.1                        \\
    3000 & 0.7773                                     & 0.7772                                              & 0.8                        \\
    \hline
\end{tabular}

\end{table}

\Cref{tab:exp3} indicates that the bounded surrogate \eqref{eq:canonical_surrogate_phi} is numerically indistinguishable from the exact profile at the stage-one scale. The paired mean gap \(\Delta\) remains on the order of \(10^{-4}\) throughout the grid \(m \in \{100, \dots, 3000\}\). These results suggest that the bounded surrogate retains the empirical behavior of the exact spherical design while providing a stable and analytically tractable implementation.

\section{Conclusion}\label{sec:conclusion}

This paper develops a deflection-based optimal design perspective on sparse phase retrieval initialization by isolating the matrix used in Stage One from the subsequent dense extraction. By recognizing that these two stages solve fundamentally different estimation problems, we established that the centered linear rule \(y-1\) is the Gaussian exact optimum and that \(nt-1\) is its bounded spherical, direction-only counterpart. This decoupled framework provides a theoretical explanation for the empirical advantages observed for centered preprocessors and shows that improved screening performance can be obtained without introducing additional computational overhead relative to classical baselines.

Beyond establishing the optimal initial matrices under the deflection criterion, this work clarifies that diagonal screening is fundamentally governed by the separation of diagonal scores. This perspective is complementary to recent multistage iterative algorithms: while iterative methods improve the sample-complexity scaling through support updates, our design theory characterizes both the exact and surrogate deflection quotients that appear as the design-dependent factors in the finite-sample screening condition. A natural direction for future work is to adapt these spectral designs to iterative support update steps, which requires overcoming the technical challenges of data-dependent support estimates and conditional geometric structures.

\appendices
\crefalias{section}{appendix}
\crefalias{subsection}{appendix}

\section{Proofs of Supporting Propositions}\label{appendices:A}

\subsection{Proof of \Cref{prop:population}}
\begin{proof}
	Fix a measurable \(f:[0,\infty)\to\RR\) with \(\EE[f(Y)^2]<\infty\), and write
	\(
	\bfM_f:=\EE[f(|\bfa^*\bfx|^2)\bfa\bfa^*]
	\)
	for a generic \(\bfa\sim\mathcal{CN}(0,\bfI_n)\). Since the law of \(\bfa\) is rotationally invariant, \(\bfM_f\) commutes with every unitary matrix that fixes \(\bfx\). Hence, \(\bfM_f\) has the form
	\begin{equation}
		\bfM_f=\alpha \bfI_n+\beta \bfx\bfx^*
	\end{equation}
	for some scalars \(\alpha,\beta\).

	To identify \(\alpha\), let \(\bfu\perp \bfx\) with \(\|\bfu\|_2=1\). Then \(\bfu^*\bfa\sim\mathcal{CN}(0,1)\) is independent of \(\bfa^*\bfx\), so
	\begin{equation}
		\bfu^*\bfM_f\bfu
		=
		\EE[f(|\bfa^*\bfx|^2)|\bfu^*\bfa|^2]
		=
		\EE[f(Y)]\EE[|\bfu^*\bfa|^2]
		=
		\alpha_f.
	\end{equation}
	Hence, \(\alpha=\alpha_f\).

	Next,
	\begin{equation}
		\bfx^*\bfM_f\bfx
		=
		\EE[f(|\bfa^*\bfx|^2)|\bfa^*\bfx|^2]
		=
		\EE[Yf(Y)].
	\end{equation}
	Because \(\|\bfx\|_2=1\), the representation \(\bfM_f=\alpha_f\bfI_n+\beta \bfx\bfx^*\) gives
	\begin{equation}
		\alpha_f+\beta=\EE[Yf(Y)],
	\end{equation}
	so \(\beta=\EE[(Y-1)f(Y)]=\beta_f\). This proves \eqref{eq:population_matrix}.

	Finally,
	\begin{align}
		\EE s_j(f)
		 & =
		\bfe_j^*(\EE \bfD_f-\alpha_f\bfI_n)\bfe_j \nonumber \\
		 & =
		\bfe_j^*(\beta_f \bfx\bfx^*)\bfe_j \nonumber        \\
		 & =
		\beta_f |x_j|^2,
	\end{align}
	which proves \eqref{eq:population_score_identity}.
\end{proof}

\subsection{Proof of \Cref{prop:spherical_population}}
\begin{proof}
	Write \(\bfa=r\boldsymbol{\omega}\), where \(r=\|\bfa\|_2\) and \(\boldsymbol{\omega}=\bfa/\|\bfa\|_2\). For complex Gaussian \(\bfa\), the radius \(r\) and direction \(\boldsymbol{\omega}\) are independent, and \(\boldsymbol{\omega}\) is uniformly distributed on the complex unit sphere. Since
	\begin{equation}
		t
		=
		\frac{|\bfa^*\bfx|^2}{\|\bfa\|_2^2}
		=
		|\boldsymbol{\omega}^*\bfx|^2,
	\end{equation}
	the scalar variable \(T:=t\) depends only on the direction. Standard sphere identities imply
	\begin{equation}
		T\sim \mathrm{Beta}(1,n-1),
	\end{equation}
	with density \(p_n(t)=(n-1)(1-t)^{n-2}\) on \([0,1]\).

	Let
	\begin{equation}
		\bfM_\phi:=\EE[\phi(T)\boldsymbol{\omega}\boldsymbol{\omega}^*].
	\end{equation}
	By rotational invariance, \(\bfM_\phi\) must have the form
	\begin{equation}
		\bfM_\phi = \alpha\bfI_n+\beta\bfx\bfx^*.
	\end{equation}
	To identify the coefficients, choose a unit vector \(\bfv\perp \bfx\). Conditional on \(T=t\), the remaining energy \(1-t\) is uniformly distributed over the orthogonal \((n-1)\)-dimensional subspace, so
	\begin{equation}
		\EE[|\bfv^*\boldsymbol{\omega}|^2\mid T=t]=\frac{1-t}{n-1}.
	\end{equation}
	Therefore,
	\begin{equation}
		\bfv^* \bfM_\phi \bfv
		=
		\frac{1}{n-1}\EE[(1-T)\phi(T)]
		=
		\alpha_\phi.
	\end{equation}

	Along the signal direction,
	\begin{equation}
		\bfx^*\bfM_\phi\bfx
		=
		\EE[T\phi(T)].
	\end{equation}
	Since \(\bfx^*\bfM_\phi\bfx=\alpha_\phi+\beta_\phi\), we obtain
	\begin{align}
		\beta_\phi
		 & =
		\EE[T\phi(T)]
		-
		\frac{1}{n-1}\EE[(1-T)\phi(T)] \nonumber \\
		 & =
		\frac{n}{n-1}\EE\left[\left(T-\frac{1}{n}\right)\phi(T)\right].
	\end{align}
	This establishes \eqref{eq:spherical_population}, \eqref{eq:spherical_alpha}, \eqref{eq:spherical_beta}. Finally,
	\begin{align}
		\EE s_j(\phi)
		 & =
		\bfe_j^*(\EE \bfD_\phi-\alpha_\phi\bfI_n)\bfe_j \nonumber \\
		 & =
		\bfe_j^*(\beta_\phi \bfx\bfx^*)\bfe_j \nonumber           \\
		 & =
		\beta_\phi |x_j|^2,
	\end{align}
	which completes the proof.
\end{proof}

\section{Proofs of Finite-Sample Results}\label{appendices:B}

\subsection{Proof of \Cref{thm:finite_sample_local_response}}\label{sec:proof_thm_finite_sample_local_response}
For \(\eta > 0\), define the non-support and support fluctuation thresholds as
\begin{align}
	\tau_{m,n}^{\mathrm{non}}(\phi;\eta)
	 & :=
	\beta_\phi \sqrt{\frac{2(1+\eta)\log n}{m Q_\phi}}
	+
	\frac{2(1+\eta)\|\phi\|_\infty\log n}{3m},
	\label{eq:finite_sample_nonsupport} \\
	\tau_{m,n}^{\mathrm{supp}}(\phi;\eta)
	 & :=
	\beta_\phi \sqrt{\frac{2(1+\eta)\log n}{m \widetilde{Q}_\phi}}
	+
	\frac{2(1+\eta)\|\phi\|_\infty\log n}{3m}.
	\label{eq:finite_sample_support}
\end{align}
Fix \(j\in\{1,\dots,n\}\). The score deviation \(s_j-\EE s_j = \frac{1}{m}\sum_{i=1}^m \left( \phi(t_i)|u_{ij}|^2 - \EE[\phi(t_i)|u_{ij}|^2] \right)\) is an average of independent, centered random variables, each bounded in absolute value by \(2\|\phi\|_\infty\). Applying Bernstein's inequality~\cite{Vershynin2018Highdimensionalprobabilityintroduction} with a single-sample variance bound \(\operatorname{Var}\left(\phi(t_i)|u_{ij}|^2\right) \le \sigma_j^2\) yields, for any \(u>0\),
\begin{equation}\label{eq:bernstein_general_proof}
	\PP\left( |s_j-\EE s_j| \ge \sqrt{\frac{2 \sigma_j^2 u}{m}} + \frac{2\|\phi\|_\infty u}{3m} \right) \le 2e^{-u}.
\end{equation}
We analyze \(\sigma_j^2\) in two cases depending on the support membership:
\begin{enumerate}[label=(\roman*),itemsep=2pt,topsep=2pt]
	\item \textbf{Non-support coordinates \(j \notin S\):} Here \(x_j=0\), and the exact null variance gives
	      \begin{equation}\label{eq:nonsupport-j}
		      \sigma_j^2 = \beta_\phi^2/Q_\phi.
	      \end{equation}
	      Substituting this into \eqref{eq:bernstein_general_proof} with \(u=(1+\eta)\log n\) and union bounding over all \(n-k\) non-support coordinates yields \(\max_{j \notin S} |s_j| \le \tau_{m,n}^{\mathrm{non}}(\phi;\eta)\) with probability at least \(1-2(n-k)n^{-(1+\eta)}\).

	\item \textbf{Support coordinates \(j \in S\):} Using \(|u_{ij}|^4 \le |u_{ij}|^2 \le 1\), the second moment is bounded by
	      \begin{equation}\label{eq:support-j}
		      \sigma_j^2 \le \EE[\phi(t_i)^2] = \beta_\phi^2/\widetilde{Q}_\phi.
	      \end{equation}
	      Substituting this into \eqref{eq:bernstein_general_proof} with \(u=(1+\eta)\log n\) and union bounding over all \(k\) support coordinates yields \(\max_{j \in S} |s_j - \beta_\phi |x_j|^2| \le \tau_{m,n}^{\mathrm{supp}}(\phi;\eta)\) with probability at least \(1-2kn^{-(1+\eta)}\).
\end{enumerate}
Union bounding both cases, with probability at least \(1-3n^{-\eta}\), statements \eqref{eq:finite_sample_nonsupport} and \eqref{eq:finite_sample_support} hold simultaneously. On this high-probability event, we first establish that any support coordinate with sufficient energy is successfully selected. Indeed, if \(j \in S\) satisfies the screening condition:
\begin{equation}\label{eq:strong_coordinate_condition}
	\beta_\phi |x_j|^2
	>
	\tau_{m,n}^{\mathrm{non}}(\phi;\eta) + \tau_{m,n}^{\mathrm{supp}}(\phi;\eta)
\end{equation}
then \(s_j \ge \beta_\phi\,|x_j|^2 - \tau_{m,n}^{\mathrm{supp}}(\phi;\eta) > \tau_{m,n}^{\mathrm{non}}(\phi;\eta)\). Since all non-support scores satisfy \(s_\ell \le \tau_{m,n}^{\mathrm{non}}(\phi;\eta)\), such \(j\) has a strictly larger score than all non-support coordinates, implying \(j \in \widehat S_k\).

By contraposition, this selection guarantee implies that any missed support coordinate \(j \in S \setminus \widehat S_k\) must violate the condition \eqref{eq:strong_coordinate_condition}, meaning:
\begin{equation}\label{eq:missed_coordinate_energy_bound}
	\beta_\phi |x_j|^2
	\le
	\tau_{m,n}^{\mathrm{non}}(\phi;\eta) + \tau_{m,n}^{\mathrm{supp}}(\phi;\eta).
\end{equation}
Summing this inequality over all missed support coordinates (of which there are at most \(k\)) yields the leaked support energy bound
\begin{equation}\label{eq:leaked_energy_proof_step}
	\sum_{j \in S \setminus \widehat S_k} |x_j|^2
	\le
	\frac{k\left(\tau_{m,n}^{\mathrm{non}}(\phi;\eta) + \tau_{m,n}^{\mathrm{supp}}(\phi;\eta)\right)}{\beta_\phi}.
\end{equation}
Substituting \eqref{eq:finite_sample_nonsupport} and \eqref{eq:finite_sample_support} with \eqref{eq:nonsupport-j} and \eqref{eq:support-j} into \eqref{eq:leaked_energy_proof_step} yields
\begin{align}\nonumber
	\sum_{j \in S \setminus \widehat S_k} |x_j|^2
	 & \le
	\sqrt{2(1+\eta)} \frac{k\sqrt{\log n}}{\sqrt{m}} \left( \frac{1}{\sqrt{Q_\phi}} + \frac{1}{\sqrt{\widetilde{Q}_\phi}} \right) \\
	 & \quad +
	\frac{4(1+\eta)}{3} \frac{k \|\phi\|_\infty \log n}{m \beta_\phi}.
\end{align}
Bounding each of the two terms on the right-hand side by \(\epsilon/2\) yields the sufficient conditions
\begin{align}
	m & \ge \frac{8(1+\eta)k^2\log n}{\epsilon^2}(\frac{1}{\sqrt{Q_\phi}} + \frac{1}{\sqrt{\widetilde{Q}_\phi}})^2,\quad \text{and} \\
	m & \ge \frac{8(1+\eta)k\|\phi\|_\infty\log n}{3\epsilon\beta_\phi}.
\end{align}
Setting \(C_\eta = 8(1+\eta)\) and \(C'_\eta = \frac{8(1+\eta)}{3}\) ensures the leaked support energy is bounded by \(\epsilon\), completing the proof.~\hfill\IEEEQEDopen

\subsection{Proof of \Cref{cor:spherical_surrogate_localized_rate}}\label{sec:proof_cor_spherical_surrogate_localized_rate}
For each coordinate \(j=1,\dots,n\), write
\begin{equation}
	s_j-\EE s_j
	=
	\frac1m\sum_{i=1}^m Z_{ij},
\end{equation}
where
\begin{equation}
	Z_{ij}
	:=
	X_{ij}-\EE X_{ij},
	\qquad
	X_{ij}:=(nt_i-1)|u_{ij}|^2.
\end{equation}
Fix \(C_\eta>0\) large enough and define
\begin{equation}
	\mathcal G_{ij}
	:=
	\left\{
	t_i\le C_\eta \frac{\log(mn)}{n},
	~
	|u_{ij}|^2\le C_\eta \frac{\log(mn)}{n}
	\right\}
\end{equation}
for each sample-coordinate pair \((i,j)\). Both \(t_i\) and \(|u_{ij}|^2\) have the \(\mathrm{Beta}(1,n-1)\) distribution. Thus, for \(B\sim\mathrm{Beta}(1,n-1)\) and \(0\le a\le 1\),
\begin{equation}
	\PP(B>a)
	=
	(1-a)^{n-1}
	\le
	e^{-(n-1)a}.
\end{equation}
Applying this with \(a=C_\eta\log(mn)/n\) to \(t_i\) and \(|u_{ij}|^2\), and using a union bound over the two tails give
\begin{equation}
	\PP(\mathcal G_{ij}^c)
	\le
	(mn)^{-2(1+\eta)}
	\qquad
	\text{for all }i,j.\label{eq:PGijc}
\end{equation}
Set
\begin{equation}
	\widetilde X_{ij}:=X_{ij}\mathbf 1_{\mathcal G_{ij}},
	\qquad
	\widetilde Z_{ij}:=\widetilde X_{ij}-\EE\widetilde X_{ij}.
\end{equation}
For fixed \(j\), \(\{\widetilde Z_{ij}\}_{i=1}^m\) are independent and centered. On \(\mathcal G_{ij}\),
\begin{equation}
	|X_{ij}|
	\le
	(nt_i+1)|u_{ij}|^2
	\le
	C_\eta\frac{\log^2(mn)}{n},
\end{equation}
and hence \(|\widetilde Z_{ij}|\le C_\eta\log^2(mn)/n\). Also,
\begin{align}
	\nonumber
	\operatorname{Var}(\widetilde Z_{ij})
	\le
	\EE X_{ij}^2
	 & =
	\EE\!\left[(nt_i-1)^2|u_{ij}|^4\right]
	\\
	 & \le
	\sqrt{\EE[(nt_i-1)^4]\,\EE|u_{ij}|^8}
	\le
	\frac{C}{n^2}. \label{eq:var-Zij}
\end{align}
Here \(\EE[(nt_i-1)^4]\le C\) and \(\EE|u_{ij}|^8\le C/n^4\). Applying Bernstein's inequality and the union bound over \(j\) yields
\begin{equation}
	\max_{1\le j\le n}
	\left|
	\frac{1}{m}\sum_{i=1}^m\widetilde Z_{ij}
	\right|
	\le
	C_\eta\left[
		\sqrt{\frac{\log n}{mn^2}}
		+
		\frac{\log n\,\log^2(mn)}{mn}
		\right].\label{eq:tildeZij-bernstein}
\end{equation}
with probability at least \(1-2n^{-\eta}\). Then we consider the truncation. Let
\begin{equation}
	\mathcal G
	:=
	\bigcap_{i=1}^m\bigcap_{j=1}^n \mathcal G_{ij}.
\end{equation}
The preceding estimate and a union bound imply \(\PP(\mathcal G^c)\le n^{-\eta}\). On \(\mathcal G\), \(X_{ij}=\widetilde X_{ij}\) for all \(i,j\), so
\begin{equation}
	\frac1m\sum_{i=1}^m Z_{ij}
	=
	\frac1m\sum_{i=1}^m \widetilde Z_{ij}
	-
	\EE[X_{ij}\mathbf 1_{\mathcal G_{ij}^c}].
\end{equation}
By Cauchy--Schwarz, we have
\begin{align}
	\nonumber
	\left|\EE[X_{ij}\mathbf 1_{\mathcal G_{ij}^c}]\right|
	 & \overset{\hphantom{\eqref{eq:PGijc},\eqref{eq:var-Zij}}}{\le}
	\sqrt{\EE X_{ij}^2}\sqrt{\PP(\mathcal G_{ij}^c)}
	\\
	 & \overset{\eqref{eq:PGijc},\eqref{eq:var-Zij}}{\le}
	\frac{C_\eta}{n}(mn)^{-(1+\eta)},
\end{align}
which is negligible relative to the Bernstein bound~\eqref{eq:tildeZij-bernstein}. Therefore, applying the union bound over \(j\), we obtain with probability at least \(1-3n^{-\eta}\),
\begin{equation}
	\max_{1\le j\le n}
	|s_j-\EE s_j|
	\le
	C_\eta\left[
		\sqrt{\frac{\log n}{mn^2}}
		+
		\frac{\log n\,\log^2(mn)}{mn}
		\right]. \label{eq:sj-derivation}
\end{equation}

If \(j\in S\setminus \widehat S_k\), then at least one non-support coordinate has score at least \(s_j\). Since \(\EE s_j=\beta_{\phi^\dagger}|x_j|^2\) on \(S\) and \(\EE s_\ell=0\) on \(S^c\),
\begin{align}
	\nonumber
	\beta_{\phi^\dagger}|x_j|^2
	 & \overset{\hphantom{\eqref{eq:sj-derivation}}}{\le}
	|s_j-\beta_{\phi^\dagger}|x_j|^2|
	+
	\max_{\ell\notin S}|s_\ell|
	\\
	 & \overset{\eqref{eq:sj-derivation}}{\le}
	2C_\eta\left[
		\sqrt{\frac{\log n}{mn^2}}
		+
		\frac{\log n\,\log^2(mn)}{mn}
		\right].
\end{align}
Finally,
\begin{equation}
	\beta_{\phi^\dagger}
	=
	\frac{n}{n-1}\EE\!\left[\left(T-\frac1n\right)(nT-1)\right]
	=
	\frac{1}{n+1},
\end{equation}
so summing over the at most \(k\) missed support coordinates gives
\begin{equation}
	\sum_{j\in S\setminus \widehat S_k}|x_j|^2
	\le
	C_\eta\left[
		k\sqrt{\frac{\log n}{m}}
		+
		\frac{k\log n\,\log^2(mn)}{m}
		\right].
\end{equation}
The stated sample-size condition makes the two terms on the right at most \(\epsilon/2\) each, completing the proof.~\hfill\IEEEQEDopen

\subsection{Proof of \Cref{thm:plugin_stability}}\label{sec:proof_thm_plugin_stability}
Since the plug-in preprocessor differs from the oracle preprocessor by the constant \(-\Delta\), one has
\begin{equation}
	\widehat{\bfD}_{\mathrm{sph}}
	=
	\bfD_{\mathrm{sph}}
	-
	\Delta\widehat{\bfSigma},
	\qquad
	\widehat{\bfSigma}
	:=
	\frac{1}{m}\sum_{i=1}^m \bfu_i\bfu_i^*.
\end{equation}
Using the relationship \(\widehat\alpha_{\phi} = \alpha_{\phi} - \Delta/n\) established in \eqref{eq:plugin_alpha_definition}, we obtain
\begin{equation}
	(\widehat{\bfD}_{\mathrm{sph}}-\widehat\alpha_{\phi}\bfI_n)
	-
	(\bfD_{\mathrm{sph}}-\alpha_{\phi}\bfI_n)
	=
	-\Delta\left(\widehat{\bfSigma}-\frac{1}{n}\bfI_n\right).
\end{equation}
Taking the \(j\)-th diagonal entry yields
\begin{equation}
	\widehat{s}_j - s_j = -\Delta\,\bfe_j^*\left(\widehat{\bfSigma}-\frac{1}{n}\bfI_n\right)\bfe_j.
\end{equation}

Since \(y_i/r^2\sim \mathrm{Exp}(1)\), the empirical scale estimate satisfies the standard sub-exponential concentration bound~\cite{Vershynin2018Highdimensionalprobabilityintroduction}
\begin{equation}\label{eq:delta_bound_main}
	\PP\left(
	|\Delta|
	\ge
	C_0 r^2
	\left(
		\sqrt{\frac{u}{m}}+\frac{u}{m}
		\right)
	\right)
	\le
	2e^{-u}
\end{equation}
for an absolute constant \(C_0>0\) and every \(u>0\).

For the diagonal entries of the empirical covariance,
\begin{equation}
	\bfe_j^*\left(\widehat{\bfSigma}-\frac{1}{n}\bfI_n\right)\bfe_j = \frac{1}{m}\sum_{i=1}^m \left( |u_{ij}|^2 - 1/n \right).
\end{equation}
The centered summands have sub-exponential norm bounded by \(K/n\) and variance bounded by \(C_1/n^2\)~\cite{Vershynin2018Highdimensionalprobabilityintroduction}. Bernstein's inequality and a union bound over the \(n\) coordinates yield, for every \(u>0\),
\begin{align}
	\nonumber & \PP\left(
	\max_{1\le j\le n}
	\left|
	\bfe_j^*\left(\widehat{\bfSigma}-\frac{1}{n}\bfI_n\right)\bfe_j
	\right|
	\ge
	C_2
	\left(
		\sqrt{\frac{u}{mn^2}}+\frac{u}{mn}
		\right)
	\right)
	\\
	          & \quad \le 2n e^{-u}.
\end{align}
Taking \(u=(1+\eta)\log n\), intersecting this event with \eqref{eq:delta_bound_main}, and absorbing constants into \(C_\eta\) yields
\begin{align}
	\nonumber & \max_{1\le j\le n}
	\left|
	\widehat{s}_j-s_j
	\right|
	\\
	\le~      &
	C_\eta r^2
	\left(
	\sqrt{\frac{\log n}{m}}+\frac{\log n}{m}
	\right)
	\left(
	\sqrt{\frac{\log n}{mn^2}}+\frac{\log n}{mn}
	\right).
\end{align}
Under the typical regime \(m \ge \log n\), the right-hand side is dominated by \(C_\eta r^2 \frac{\log n}{mn}\), which proves \eqref{eq:plugin_uniform_bound}.~\hfill\IEEEQEDopen

\bibliographystyle{IEEEtran}
\bibliography{bibs/IEEEabrv2025, bibs/main, bibs/mypapers}

\newpage
\onecolumn

\section*{Supplementary Material: Downstream Nonconvex Refinement Simulations}
\addcontentsline{toc}{section}{Supplementary Material: Downstream Nonconvex Refinement Simulations}

This document provides supplementary numerical simulations evaluating the performance of the proposed deflection-optimal spectral initializers when combined with downstream nonconvex refinement solvers. Specifically, we test the classical Thresholded Wirtinger Flow (TWF) algorithm~\cite{Cai2016Optimalrates} in the full space \(\CC^n\).

\subsection*{Algorithm Description}
We seek to recover the \(k\)-sparse signal \(\bfx \in \CC^n\) from magnitude-only measurements \(y_i = |\bfa_i^* \bfx|^2\) by minimizing the global intensity-based loss:
\begin{equation}\label{eq:supp_loss}
	f(\bfz) = \frac{1}{2m} \sum_{i=1}^m \left( |\bfa_i^* \bfz|^2 - y_i \right)^2, \quad \bfz \in \CC^n.
\end{equation}
Starting from the two-stage spectral initializer \(\bfx^{(0)} = \bfx^0\) from \Cref{alg:practical_two_stage}, we update the estimate via gradient descent followed by hard thresholding to enforce \(k\)-sparsity:
\begin{equation}\label{eq:supp_update}
	\bfx^{(t+1)} = \mathcal{H}_k\left(\bfx^{(t)} - \eta \nabla f(\bfx^{(t)})\right),
\end{equation}
where \(\mathcal{H}_k(\bfv)\) is the hard thresholding operator that keeps the \(k\) largest elements of \(\bfv\) in magnitude and sets the rest to zero. The complex Wirtinger gradient is given by:
\begin{equation}\label{eq:supp_grad}
	\nabla f(\bfx^{(t)}) = \frac{1}{m} \sum_{i=1}^m \left( |\bfa_i^* \bfx^{(t)}|^2 - y_i \right) (\bfa_i^* \bfx^{(t)}) \bfa_i.
\end{equation}
We set the step size \(\eta = 0.05\) and run the gradient descent for \(T = 500\) iterations.

\begin{figure}[!h]
	\centering
	\includegraphics[width=0.6\linewidth]{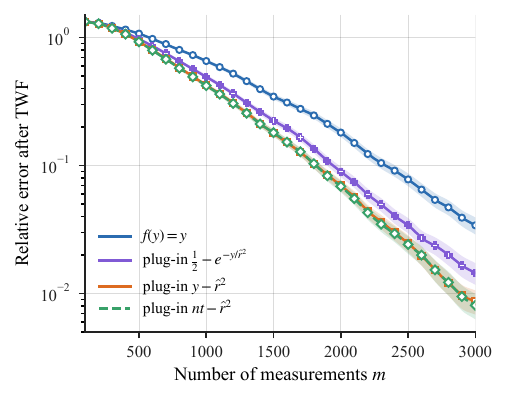}
	\caption{Phase-invariant relative error after downstream Thresholded Wirtinger Flow refinement (500 iterations, step size \(\eta=0.05\)) under different stage-one preprocessors. The simulation uses \(n=1000, k=20\), and the results are averaged over trials. The deflection-optimal designs (centered and spherical surrogate) enable the nonconvex solver to converge to much lower errors at smaller sample sizes.}
	\label{fig:exp_downstream_error}
\end{figure}

\subsection*{Simulation Results}
Under the same experimental settings as in the main text (\(n = 1000, k = 20\)), we compare the phase-invariant relative error \(\mathcal{L}(\bfx^{(T)}, \bfx)\) (defined in~\eqref{eq:rel_err_metric}) after the Thresholded Wirtinger Flow refinement across different stage-one designs. The results, plotted in \Cref{fig:exp_downstream_error}, demonstrate that our proposed deflection-optimal spherical surrogate preprocessor, alongside the centered preprocessor, consistently yields initialization points of superior quality.

This initialization quality enables the downstream nonconvex solver to converge to a significantly lower relative error at smaller sample sizes \(m\) compared to raw and exponential preprocessors, illustrating the practical significance of optimal stage-one design for end-to-end recovery.

\clearpage

\end{document}